\pgfplotsset{compat=newest}
\newtheorem{remark}{Remark}
\newtheorem{lemma}{Lemma}
\newtheorem{proposition}{Proposition}
\newtheorem{problem}{Problem}
\newtheorem{property}{Property}
\newtheorem{theorem}{Theorem}
\newtheorem{definition}{Definition}
\newcommand*{\QEDB}{\hfill\ensuremath{\square}}%
\newcommand{\pd}{\succ 0}
\newcommand{\nd}{\prec 0}
\newcommand{\psd}{\succeq 0}
\newcommand{\nsd}{\preceq 0}
\newcommand{\Hcl}{\mathcal{H}_{cl}}
\newcommand{\N}{\mathfrak{N}}
\newcommand{\Np}{\mathfrak{N}_{> 0}}
\newcommand{\R}{\mathfrak{R}}
\newcommand{\Rp}{\mathfrak{R}_{\geq 0}}
\newcommand{\bA}{\mathbb{A}}
\newcommand{\bB}{\mathbb{B}}
\newcommand{\bJ}{\mathbb{J}}
\newcommand{\bW}{\mathbb{W}}
\newcommand{\bV}{\mathbb{V}}
\newcommand{\cA}{\mathcal{A}}
\newcommand{\cC}{\mathcal{C}}
\newcommand{\cD}{\mathcal{D}}
\newcommand{\cJ}{\mathcal{J}}
\newcommand{\cK}{\mathcal{K}}
\newcommand{\cL}{\mathcal{L}}
\newcommand{\cM}{\mathcal{M}}
\newcommand{\cP}{\mathcal{P}}
\newcommand{\cX}{\mathcal{X}}
\newcommand{\xcl}{\bar{x}}
\newcommand{\dom}{\operatorname{dom}}
\newcommand{\Id}{\mathbf{I}}
\newcommand{\minimize}{\operatorname{minimize}}
\newcommand{\Tr}{\operatorname{trace}}
\newcommand{\He}{\operatorname{He}}
\newcommand{\Diag}{\operatorname{diag}}
\newcommand{\spec}{\operatorname{spec}}
\newcommand{\source}{{This is an archival version of our paper. Please cite the published version DOI:  \href{https://doi.org/10.1109/TAC.2022.3213616}{https://doi.org/10.1109/TAC.2022.3213616}}}
\def\ps@IEEEtitlepagestyle{}
\title{\LARGE \bf Robust Output Feedback Control Design in the Presence of \\Sporadic Measurements\\ (Extended Version)}
\author{Roberto Merco, Francesco Ferrante, Ricardo G. Sanfelice, Pierluigi Pisu
\thanks{Roberto Merco and Pierluigi Pisu are with the Automotive Department of Clemson University, Greenville SC 29607 USA. Email: rmerco@g.clemson.edu, pisup@clemson.edu}
\thanks{Francesco Ferrante is with Department of Engineering, University of Perugia, Via G. Duranti, 67, 06125 Perugia, Italy. Email: francesco.ferrante@unipg.it}
\thanks{Ricardo G. Sanfelice is with the Electrical and Computer Engineering Department, University of California, Santa Cruz, CA 95064, USA. Email: ricardo@ucsc.edu.
}
\thanks{Research by R. Merco and P. Pisu is supported by the National Science Foundation (NSF) under grant No. CNS-1544910. Any opinions, findings and conclusions or recommendations expressed in this material are those of the authors and do not necessarily reflect the views of the National Science Foundation.} 
\thanks{Research by F. Ferrante is funded in part by ANR via project HANDY, number ANR-18-CE40-0010.}
\thanks{Research by R. G. Sanfelice partially supported by the National Science Foundation under Grant no. CNS-2111688, Grant no. ECS-1710621, Grant no. CNS-1544396, and Grant no. CNS-2039054, by the Air Force Office of Scientific Research under Grant no. FA9550-19-1-0053, Grant no. FA9550-19-1-0169, and Grant no. FA9550-20-1-0238, and by the Army Research Office under Grant no. W911NF-20-1-0253.
}}
\begin{document}
\maketitle
\begin{abstract}
Output feedback control design for linear time-invariant systems in the presence of sporadic measurements and exogenous perturbations is addressed. To cope with the sporadic availability of measurements of the output, a hybrid dynamic output feedback controller equipped with a holding device whose state is reset when a new measurement is available is designed. The closed-loop system, resulting from the interconnection of the controller and the plant, is augmented with a timer variable triggering the arrival of new measurements and its properties are analyzed using hybrid system tools. Building upon Lyapunov theory for hybrid systems, sufficient conditions for internal and $\mathcal{L}_2$ input-to-output stability are proposed. A linear matrix inequalities-based design methodology for the co-design of the gains of the controller and the parameters of the holding device is presented. The effectiveness of the proposed design approach is showcased in a numerical example.
\end{abstract}
\section{Introduction}
\subsection{Background}
Over the last few decades, Networked Control Systems (NCSs) have been attracting an increasing interest in the research community; see, e.g., \cite{hespanha2007survey} and the references therein. A key feature of NCSs is the capability of sharing information, such as plant measurements and control signals, through a network. Due to the network being digital and of limited bandwidth, information exchanged between the plant and the controller happens in a sporadic fashion. In this setting, the classical paradigm assuming continuously or discretely periodically-sampled data is no longer realistic. This has brought to life an entire area of research aimed at analyzing aperiodic sampled-data systems \cite{hespanha2007survey}.

Three different main approaches have been developed  in the literature for the analysis of aperiodic sampled-data control systems.
In the input-delay approach \cite{fridman2004robust}, aperiodic sampling is modeled via a time-varying input delay. A different philosophy is followed in the lifting approach \cite{mirkin2016intermittent}, where the sampled-data control problem is converted into an equivalent discrete control problem with an infinite-dimensional input space. A complete different modeling paradigm is considered in the impulsive system approach \cite{naghshtabrizi2008exponential,briat2013convex}, in which the aperiodic sampling mechanism is captured via an impulsive dynamical system.

When only output measurements of the plant are available, controller design is more challenging and one needs to rely on output feedback strategies. In this setting, two architectures are commonly considered in the literature: observer-based controllers and dynamic output feedback controllers. An observer-based controller relies on a dynamic state estimator (or observer) and on a static state-feedback control law computed using the estimated state. Most observer-based controller architectures rely on the \emph{certainty equivalence} principle. This principle consists of plugging the estimate provided by the observer into the controller as if it were the actual plant state; see \cite{hespanha1999certainty}. The \emph{certainty equivalence} principle is adopted in \cite{bauer2013decentralized}, where switched observer-based architectures are designed for a distributed NCS. In \cite{ferrante2015observer}, the authors propose a hybrid observer-based controller to stabilize an LTI plant in the presence of intermittent measurements and sporadic input access. The approach therein relies on the use of a separation principle. 
%More recently, an approach based on vector Lyapunov functions has been proposed in \cite{rios2019robust} to design observer-based controllers for sampled-data systems. 
However, when it comes to enforce disturbance attenuation properties, the use of separation principles is not appropriate as it does not enable to ensure the satisfaction of a specific input-output behavior. In this setting, the use of observer-based architectures becomes less appealing.
A typical approach to overcome this drawback consists of considering a dynamic output feedback controller that does not explicitly rely on a state estimator. Indeed, it is worthwhile to notice that not all full-order compensators are observer-based controllers; see \cite{rahman2017all}.
In this setting, closed-loop performance specifications can be imposed via Lyapunov-like conditions. This is an established philosophy in $\mathcal{H}_\infty$ control; see, e.g., \cite{scherer1997multiobjective}. Output feedback controller design for sampled-data systems is addressed in \cite{fridman2005input} via time-delay tools and in \cite{da2016regional} via looped-functional techniques.  

One common feature of the approaches discussed so far is that they rely on zero-order hold (ZOH) mechanisms to generate an estimate of the plant output in between sampling times. However, as shown in \cite{karafyllis2009continuous}
more elaborated holding devices can be employed to improve robustness and obtain enlarged maximal admissible transmission intervals.
\subsection{Contribution}
In this paper, we consider closed-loop systems where the sensing path of the system is subject to sporadic communication. In this setting, we propose a methodology for the co-design of an output feedback dynamic controller along with a general holding device. 
Notice that the setup considered in this paper, known as \textit{one-channel feedback network control system} \cite{hespanha2007survey}, constitutes a relevant case study since it can capture several configurations of network control systems; see \cite[Section III.A]{hespanha2007survey}. 
The main contributions of this work are as follows:
\begin{itemize}
\item We propose a hybrid control scheme constituted by the cascade of a general holding device and a linear dynamic output feedback controller. In particular, the controller we consider is inspired by the literature of linear $\mathcal{H}_\infty$ control and is of arbitrary structure, i.e., it is not issued from an observer-based paradigm. This enables to ensure the satisfaction of $\mathcal{H}_\infty$ specifications.
The holding device is a hybrid system whose state is reset to the plant measurement whenever a new transmission occurs and generates a signal that feeds the controller. 
\item Using a hybrid system model of the closed-loop, we propose results for the simultaneous design (co-design) of the controller parameters and of the holding device dynamics. The approach we pursue relies on Lyapunov theory for hybrid systems in the framework of \cite{goebel2012hybrid}. 
\item We provide sufficient conditions in the form of matrix inequalities to ensure:
\begin{itemize}
\item zero-input exponential stability;
\item external $\mathcal{L}_2$ stability from plant perturbations to a given regulated output with prescribed $\mathcal{L}_2$-gain;
\end{itemize}
\item An algorithm based on semidefinite programming (\emph{SDP}) tools is proposed for joint design of the controller and the holding device.
\end{itemize}
This work extends our preliminary conference paper \cite{MercoACC20}. 
In \cite{MercoACC20}, only exponential stabilization is considered. 
This paper not only extends the results in \cite{MercoACC20} to $\mathcal{L}_2$ disturbance attenuation, but also provides complete proofs (no proofs are included in \cite{MercoACC20}), a less conservative and more systematic approach towards design, and a different numerical example.

Compared to existing results, the main contribution of this work consists of the co-design of an output feedback controller and a general holding device. 
%This holds also when compared to our previous work \cite{ferrante2015observer}, in which we proposed an observer-based control scheme where both measurements and control input are intermittently transmitted and ZOH devices are employed to generate output and control input intersample estimates.
The remainder of the paper is organized as follows. Section~\ref{sec:ProbStat} introduces the  problem we solve and presents the modeling of the closed-loop system.  Section~\ref{sec:StabAnaly} presents sufficient conditions for closed-loop stability. Control design issues are addressed in Section~\ref{sec:ControlDes}. Finally, in Section~\ref{sec:Example} the effectiveness of the approach is shown in a numerical example. 

\subsection{Notation}
The symbol $\Np$ denotes the set of strictly positive integers, $\N=\Np\cup\{0\}$, $\R$ is the set of real numbers, $\Rp$ is the set of nonnegative real numbers. The Euclidean space of dimension $n$ is represented by $\R^n$ and $\R^{n\times m}$ is the set of $n \times m$ real matrices. Given $A\in \R^{n \times m}$, $A^\top$ denotes the transpose of $A$ and, when $n=m$, $A^{-\top}=(A^\top)^{-1}$ when $A$ is nonsingular, $\He(A)=A+A^\top$, and $\spec(A)$ stands for the spectrum of $A$. 
The identity matrix is denoted by $\Id$.       
The symbol $\mathsf{S}_+^n$ represents the set of $n \times n$ symmetric positive definite matrices. For a symmetric matrix $A$, $A\pd$, $A\psd$, $A\nd$, and $A \nsd$ means that $A$ is, respectively, positive definite, positive semidefinite, negative definite, negative semidefinite. The symbols $\lambda_{\min}(A)$ and $\lambda_{\max}(A)$ denote respectively the smallest and the largest eigenvalue of the matrix $A$. 
In partitioned symmetric matrices, the symbol $\bullet$ represents a symmetric block. For a vector $x \in \R^n$, $\vert x \vert$ denotes its Euclidean norm. Given two vectors $x$ and $y$, we use the equivalent notation $(x, y)=[x^\top,y^\top]^\top$. Given a vector $x \in \R^n$ and a nonempty set $\cA\in \R^n$, the distance of $x$ to $\cA$ is defined as $\vert x \vert_\cA=\inf_{y\in\cA} \vert x-y \vert$. For any function $z\colon\R \rightarrow \R^n$, we denote $z(t^+)\coloneqq \textrm{lim}_{s\rightarrow t^+} z(s)$ when it exists. Solutions to hybrid systems with inputs are represented by pairs of hybrid signals (functions defined on hybrid time domains) of the type $(\phi, u)$, where $\phi$ is a hybrid arc and $u$ is a hybrid input; see \cite{sanfelice2020hybrid} for formal definitions of hybrid signals, inputs, and arcs. Given a hybrid signal $u$, $\dom_t u\coloneqq \{t\in\R_{\geq 0}\colon \exists j\in\N\,\,\mbox{s.t.}\,\,(t, j)\in\dom u\}$ and for any $s\in\dom_t u$, $j(s)\coloneqq\min\{j\in\N\colon\,\, (s, j)\in\dom u\}$.
\section{Problem Statement and Solution Outline}
\label{sec:ProbStat}
\subsection{System Description}
We consider a plant $\cP$ described by a continuous-time linear time-invariant system of the form 
\begin{equation} \label{eq:plant}
\mathcal{P}
\left\{
\begin{aligned}
&\dot{x}_p = A_p x_p + B_p u + W_p d\\
&y = C_p x_p \\
&y_o = C_{op} x_p
\end{aligned}
\right.
\end{equation}
where $x \in \R^{n_p}$ represents the state of the plant, $u \in \R^{n_u}$ the control input, $d \in \R^{n_{d}}$  is a nonmeasurable exogenous disturbance, $y \in \R^{n_{y}}$ is the measured output of the plant, and $y_o\in \R^{n_{y_o}}$ is the regulated output\footnote{For easiness of exposition, we select the regulated output to be dependent only on the plant state. On the other hand, the approach we present can be extended to more general regulated outputs.}. The constant matrices $A_p$, $B_p$, $W_p$, $C_p$, and $C_{op}$ are given and of appropriate dimensions. We study a setup in which $u$ is a continuous-time signal, whereas $y$ is measured only at some time instances $t_k$, $k \in \Np$, not known in advance. We assume that for the sequence $\{t_k\}_{k=1}^\infty$ there exist two positive real scalars $T_1 \leq T_2$ such that
\begin{equation} \label{eq:T1T2}
0\leq t_1 \leq T_2, \quad T_1 \leq t_{k+1}-t_k \leq T_2 \quad \forall k \in \Np.
\end{equation}
The lower bound on $T_1$ in condition \eqref{eq:T1T2} introduces a strictly positive minimum time in between consecutive measurements shared after the first one. As such, this avoids the existence of Zeno behavior, which are unwanted in practice. Moreover, $T_2$ defines the Maximum Allowable Transfer Interval (MATI).
For the considered setup, the problem we solve is as follows:
\begin{problem}
\label{prob:stability}
Design an output feedback controller ensuring the following properties for the closed-loop system:
\begin{enumerate}[({P}1)] \label{prob:stability:P1}
\item The set of points in which the plant and controller states are zero\footnote{The closed-loop system resulting from our approach contains additional state variables such as timers and memory states. These variables are required to remain bounded in \emph{(P\ref{prob:stability:P1})}.} is globally exponentially stable when the input $d$ is identically zero;
\item The closed-loop system is $\cL_2$ stable from the disturbance $d$ to the regulated output $y_o$ with a prescribed $\mathcal{L}_2$ gain $\gamma>0$.
\end{enumerate}
\end{problem}
\subsection{Outline of the Proposed Solution}
To solve Problem~\ref{prob:stability:P1}, we propose an output feedback controller that relies on a linear dynamic controller $\cK$ augmented with a general holding device $\cJ$. In particular, the holding device $\cJ$, which is to be designed, is used to feed the controller $\cK$ in between measurements and its state is reset to the value of the plant output any time a new measurement gets available. 

More in detail, the continuous-time dynamic controller $\cK$ we design is given by
\begin{equation} \label{eq:controller}
\cK\left\{
\begin{aligned}
&\dot{x}_c = A_c x_c + B_c \hat{y}\\
&u = C_c x_c + D_c \hat{y},
\end{aligned}\right.
\end{equation}
where $x_c \in \R^{n_c}$ is the controller state and $\hat{y}\in\R^{n_y}$ is the state of the holding device $\cJ$. 
By making use of the last received measurement of the plant output and of the controller state, the general holding device $\cJ$ generates an intersample signal that is used to feed the controller $\cK$. In particular, for a given sequence $\{t_k\}_{k=1}^\infty$ satisfying condition \eqref{eq:T1T2}, $\cJ$ is described by
\begin{equation} \label{eq:holdingDevice}
\cJ  \left\{\begin{array}{ll}
\dot{\hat{y}}(t) =H \hat{y}(t) + E x_c(t) & \forall t \not= t_k,\\
\hat{y}(t^+) = y(t)&  \forall t = t_k. \\
\end{array} \right.
\end{equation}
The operating principle of the holding device $\cJ$ is as follows. The arrival of new measurements instantaneously updates $\hat{y}$ to $y$. In between updates, $\hat{y}$ evolves according to the continuous-time dynamics in \eqref{eq:holdingDevice} and its value is used by the controller $\cK$. The matrices 
\begin{equation}
\left[\begin{array}{c|c}
A_c & B_c \\\hline C_c & D_c
\end{array}
\right], \left[
\begin{array}{c|c}
E& H
\end{array}\right]
\label{eq:contr_params}
\end{equation}
are the parameters to be designed.

\subsection{Hybrid Modeling}
The closed-loop system can be modeled as a linear system with jumps in $\hat{y}$. In particular, for all $k \in \Np$ one obtains
\begin{equation} \label{eq:impSys}
\scalebox{0.98}{$
\begin{array}{lcl}
\begin{aligned}
& \dot{x}_p= A_p x_p\! + \! B_p C_c x_c\! + \! B_p D_c \hat{y}  \! + \! W_p d \\
& \dot{x}_c = A_c x_c  \!+\! B_c \hat{y}\\
& \dot{\hat{y}}= H\hat{y}  + E x_c
\end{aligned} & 
\left. 
\begin{aligned}
\\ \\ \\
\end{aligned}\right\} &
\begin{aligned}
& \!\!\!\! \forall t \!\not= \!t_k\\
\end{aligned}\\[0.7cm]
\begin{aligned}
& x_p(t^+) = x_p(t)\\
& x_c(t^+) = x_c(t)\\
& \hat{y}(t^+) = C_p x_p(t)
\end{aligned} & 
\left. 
\begin{aligned}
\\ \\ \\
\end{aligned}\right\} &
\begin{aligned}
& \!\!\!\! \forall t \!= \!t_k \\
\end{aligned}\\[0.7cm]
y_o(t) = C_{op} x_p(t).
\end{array}$}
\end{equation}
To devise a design algorithm for the parameters of $\cK$ and $\cJ$, we model the impulsive system in \eqref{eq:impSys} into the hybrid system framework in \cite{goebel2012hybrid}. To this end, we augment the state of the closed-loop system with the auxiliary variable $\tau \in \Rp$, which is a timer that keeps track of the duration of intervals in between transmissions of new measurement data. As in \cite{ferrante2019hybrid}, to enforce \eqref{eq:T1T2}, we make $\tau$ decrease as ordinary time $t$ increases and, whenever $\tau = 0$, we reset it to any point in $[T_1, T_2]$.
%The closed-loop system can be represented by the following hybrid system
%\begin{equation}
%\begin{aligned} \label{eq:firstModel}
%\left\lbrace
%\begin{array}{ll}
%\dot{\xi} = \tilde{f}(\xi,d) & (\xi, d)\in \cC\times \R^{n_{d}} \\
%\xi^+ \in \widetilde{G}(\xi) & \xi\in \cD\\
%y_o=C_{op}x_p
%\end{array}
%\right. 
%\end{aligned}
%\end{equation}
%with state $\xi \coloneqq  (x_p, x_c,\hat{y},\tau) \in \R^{n_{\xi}}$, where $n_{\xi}\coloneqq n_{x_p}+n_{x_c}+n_y+1$ and input $d\in\R^{n_{d}}$.
%The flow map is defined for all $\xi \in \cC, \, d \in \R^{n_{d}}$ as
%$
%\tilde{f}(\xi,d) \! \coloneqq \! 
%(A_p x_p \! + \! B_p C_c x_c \! + \! B_p D_c \hat{y} \! + \! W_p d, 
%A_c x_c \!+\! B_c \hat{y}, H \hat{y} \!+\! E x_c, -1)
%$,
%the jump map is defined for all $\xi \in \cD$ as
%$
%\widetilde{G}(\xi) \! \coloneqq \! (x_p, x_c, C_p x_p, [T_1,T_2])
%$.
%The sets $\cC$ and $\cD$ are respectively defined as
%$
%\cC \coloneqq \R^{n_p+n_c+n_y}\times[0,T_2]
%$, 
%$
%\cD \coloneqq \R^{n_p+n_c+n_y}\times\{0\} 
%$
Furthermore, to simplify the analysis, we consider the change of coordinates $\eta\coloneqq C_p x_p -\hat{y}$. Hence, by taking as a state $x\coloneqq(\xcl,\eta,\tau) \in \R^{n_x}$, with $n_x\coloneqq n_p+n_c+n_y+1$ and $\xcl\coloneqq (x_p, x_c)$, the
closed-loop system can be represented by the following hybrid system
\begin{equation} \label{eq:systemH}
\Hcl
\begin{aligned}
\left\lbrace
\begin{array}{ll}
\dot{x} = f(x, d) & (x, d)\in \cC\times \R^{n_{d}}, \\
x^+  \in G(x) & x\in \cD,\\
y_o = C_o \xcl,
\end{array}
\right. 
\end{aligned}
\end{equation}
where $C_o\coloneqq\left[C_{op} \quad 0 \right]$ and the flow and jump sets are defined, respectively, as
$
\cC \coloneqq \R^{n_p+n_c+n_y}\times[0,T_2]
$, 
$\cD \coloneqq \R^{n_p+n_c+n_y}\times\{0\}$. The flow map is given for all $x \in \cC, \, d \in \R^{n_{d}}$ by
\begin{equation}\label{eq:flowMapH}
f(x, d) \coloneqq ( \bA \xcl + \bB \eta + \bV d, \bJ \xcl + \mathbb{H}\eta + \bW d,-1),
\end{equation}
with $\bA \coloneqq \left[ \begin{array}{c|c}
A_p+B_p D_c C_p &B_p C_c \\
\hline
B_c C_p & A_c
\end{array}  \right]$,
$\bB  \coloneqq  -\left[  \begin{array}{c}
B_p D_c \\ 
\hline
B_c
\end{array} \right]\!\!, \\
\bV \coloneqq \left[ \begin{array}{c}
W_p \\ 
\hline
0
\end{array} \right]$, $
\bJ \coloneqq  \left[ \begin{array}{c|c}
\bJ_1&\bJ_2
\end{array}  \right]$, and
%\bJ  \coloneqq  \left[ \begin{array}{c|c}
%C_p A_p\!-\!H_h C_p & -E_h
%\end{array}  \right]\!\!, \quad 
$\bW \coloneqq C_p W_p
$
where 
$$
\begin{aligned}
&\bJ_1\coloneqq C_pA_p+C_pB_pD_c C_p-H C_p,\quad \bJ_2\coloneqq C_pB_p C_c-E,\\
&\mathbb{H}\coloneqq C_pB_pD_c-H.
\end{aligned}
$$
The jump map is defined for all $x \in \cD$, as $G(x) \coloneqq \left(\xcl, 0, \left[T_1,T_2\right] \right)$. In particular, this set-valued jump map allows to capture all possible transmission intervals of length within $T_1$ and $T_2$. Specifically, the hybrid model \eqref{eq:systemH} captures any sequence satisfying \eqref{eq:T1T2}. Since we are interested in ensuring global exponential stability of the origin of the plant, our approach to solve Problem~\ref{prob:stability} consists of designing the holding device $\cJ$ and the parameters of controller $\cK$ such that without disturbances, i.e., $d \equiv 0$ the set\footnote{Notice that, by definition of system $\Hcl$ and of the set $\cA$, for all $x\in\cC$, one has  $\vert x\vert_\cA=\vert (\xcl, \eta)\vert$. In particular, this shows that global exponential stability of $\mathcal{A}$ for \eqref{eq:systemH} implies the desired stability properties.}
\begin{equation} 
\label{eq:setA}
\cA \coloneqq \{0\} \times \{0\} \times [0,T_2]\subset\R^{n_x}
\end{equation}
is exponentially stable for $\Hcl$ in \eqref{eq:systemH}. This property is characterized by the notion of $0$-input global exponential stability defined below, which is a direct adaptation of the notion of global exponential stability as defined in  \cite{teel2012lyapunov}.
\begin{definition} \label{def:expIOstability} 
($0$-input global exponential stability)
Let $\cA\subset\R^{n_x}$ be nonempty. The set $\cA$ is \emph{$0$-input globally exponentially stable} ($0$-input GES) for the hybrid system $\Hcl$ if there exist $\kappa, \lambda>0$ such that each maximal solution pair\footnote{A pair $(\phi, d)$ is maximal if its domain cannot be extended and complete if its domain is unbounded.} $(\phi, 0)$ to $\Hcl$ is complete and satisfies 
\begin{equation} 
\label{eq:NCSAnalysisdefeISS}
\vert \phi(t, j)\vert_{\cA}\leq \kappa e^{-\lambda (t+j)}\vert \phi(0,0)\vert_{\cA} \quad \forall (t, j)\in\dom\phi.
\end{equation} 
\end{definition}
\section{Stability Analysis}
\label{sec:StabAnaly}
\subsection{Lyapunov-based Sufficient Conditions}
To solve Problem \ref{prob:stability}, in this paper we consider the closed-loop system $\Hcl$ as the interconnection of the following two systems: a continuous-time system $\Sigma_{\xcl}$ given as
\begin{subequations}
\label{eq:Sigmas}
\begin{equation}
\begin{aligned}
&\Sigma_{\xcl} \left\{
\begin{aligned}
& \dot{\bar{x}} = \bA \xcl + \bB \eta + \bV d, \\
& y_o = C_o \xcl,
\end{aligned}
\right.
\end{aligned}
\end{equation}
and a hybrid system $\Sigma_{\eta}$ given by
\begin{equation} \label{eq:jumpsInterconnectionModel}
\begin{aligned}
&\Sigma_{\eta}
\begin{aligned}
\left\lbrace\!\!
\begin{array}{ll}
\left[
\begin{array}{c}
\dot{\eta}\\
\dot{\tau}
\end{array}
\right]
\!= \!
\left[
\begin{array}{c}
\mathbb{H}\eta + \bJ \xcl + \bW d\\
-1
\end{array}\right] & 
\tau \in [0,T_2], \\
\left[
\begin{array}{c}
\eta^+\\
\tau^+
\end{array}
\right]
\!\in \!
\left[
\begin{array}{c}
0\\
\left[T_1,T_2\right]
\end{array}\right], &  \tau = 0.\\
\end{array}
\right. 
\end{aligned}
\end{aligned}
\end{equation}
\end{subequations}
This equivalent representation of $\Hcl$ in \eqref{eq:systemH} can be exploited to formulate sufficient conditions for stability of the closed-loop system $\Hcl$ by employing an approach that is reminiscent of an ``input-to-state stability small gain'' philosophy. A conceptually similar approach has been pursued in \cite{carnevale2007lyapunov} to analyze networked control systems via dissipation-like inequalities.
 
To take a first step towards the solution to Problem~\ref{prob:stability:P1}, let us consider the following property:
\begin{property}
\label{prop:Lyapunov}
Let $\gamma$ be given. There exist continuously differentiable functions $\mathscr{V}_1\colon\R^{n_p+n_c}\rightarrow\R$ and 
$\mathscr{V}_2\colon\R^{n_y+1}\rightarrow\R$, positive definite functions $\rho_1\colon\R^{n_p+n_c}\rightarrow\Rp$ and $\sigma_1\colon\R^{n_y}\rightarrow\Rp$, functions $\rho_2\colon\R^{n_y}\rightarrow\R$, $\sigma_2\colon\R^{n_c+n_p}\rightarrow\R$, $\rho_3\colon\R^{n_p+n_c+n_{d}}\rightarrow\R$, $\sigma_3\colon\R^{n_{d}}\rightarrow\R$, and  positive scalars $\overline{c}_{v_1}, \overline{c}_{v_2}, \underline{c}_{v_1}$, $\underline{c}_{v_2}$, $k_{v_1}$, $k_{v_2}$ such that  
\begin{subequations} \label{eq:Pconds}
\begin{equation}
\label{eq:Sand_1}
\underline{c}_{v_1} \vert \xcl\vert^2\leq \mathscr{V}_1(\xcl)\leq \overline{c}_{v_1} \vert \xcl\vert^{2} \quad \forall \xcl\in \R^{n_p+n_c},  
\end{equation}
\begin{equation}
\label{eq:Sand_2}
\underline{c}_{v_2} \vert \eta\vert^2\leq \mathscr{V}_2(\eta, \tau)\leq \overline{c}_{v_2} \vert \eta\vert^2 \quad \forall(\eta,\tau)\in\R^{n_y+1},
\end{equation}
\begin{equation}
\label{eq:DFlux_1_Design}
\begin{aligned}
\langle\nabla \mathscr{V}_1(\xcl), \bA\xcl+\bB \eta+ &\bV d\rangle\!  \leq \!-\rho_1(\xcl)\!+\!\rho_2(\eta)\!+\!\rho_3(\xcl,d)\\
&  \forall(\xcl,\eta)\in\R^{n_p+n_c+n_y},d \in \R^{n_{d}},
\end{aligned}
\end{equation}
\begin{equation}
\label{eq:DFlux_2_Design}
\begin{aligned}
\langle\nabla \mathscr{V}_2(\eta,\tau),& (\mathbb{H}\eta\!+\!\bJ \xcl \!+\! \bW d, -1) \rangle\!  \leq \! -\sigma_1(\eta)\!+\!\sigma_2(\xcl)\!+\!\sigma_3(d)\\
& \forall(\eta,\tau, \xcl)\in\R^{n_y}\times[0, T_2]\times\R^{n_p+n_c}, d \in \R^{n_{d}}, 
\end{aligned}
\end{equation}
\begin{equation}
\label{eq:DFlux_1_Design_2}
-\rho_1(\xcl)+\sigma_2(\xcl)\leq -k_{v_1} \vert \xcl\vert^2  \quad \forall\xcl\in\R^{n_p+n_c},
\end{equation}
\begin{equation}
\label{eq:DFlux_2_Design_2}
-\sigma_1(\eta)+\rho_2(\eta)\leq -k_{v_2} \vert \eta\vert^2  \quad \forall\eta\in\R^{n_y},
\end{equation}
\begin{equation}
\label{eq:DFlux_3_Design}
\begin{aligned}
\rho_3(\xcl,d)+\sigma_3(d)\leq & -\xcl^\top C_o^\top C_o \xcl + \gamma^2 d^\top d\\
& \forall\xcl\in\R^{n_p+n_c}, d \in \R^{n_{d}},
\end{aligned}
\end{equation}
\end{subequations}
where $C_o\coloneqq\left[C_{op} \quad 0 \right]$.
\end{property}
\begin{remark}
The satisfaction of \eqref{eq:Sand_1}-\eqref{eq:DFlux_1_Design} naturally requires the stabilizability and detectability of the plant \eqref{eq:plant}.
\end{remark}
The following theorem employs Definition~\ref{def:expIOstability} and provides sufficient conditions for the solution to Problem~\ref{prob:stability}.

\begin{theorem} \label{theo:LES}
Let Property~\ref{prop:Lyapunov} hold. Then:
\begin{enumerate}[($i$)]
\item The set $\cA$ in \eqref{eq:setA} is $0$-input GES for the hybrid closed-loop system $\Hcl$; 
\item There exists $\alpha>0$ such that any solution pair $(\phi, d)$ to $\Hcl$ satisfies
\begin{equation} \label{eq:defL2Stability}
\scalebox{0.95}{$
\!\!\!\!\sqrt{\int_\mathcal{I}\vert y_o(r, j(r))\vert^2 dr}\!\leq\!\alpha |\phi(0,0)|_\mathcal{A} + \gamma \sqrt{\int_\mathcal{I}\!\vert d(r, j(r))\vert^2 dr}
$}
\end{equation} 
where $\mathcal{I}\coloneqq\dom_t \phi$. \QEDB
\end{enumerate}
\end{theorem}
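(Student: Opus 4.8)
The plan is to build a single Lyapunov function for the closed-loop hybrid system $\Hcl$ out of the two pieces $\mathscr{V}_1$ and $\mathscr{V}_2$ supplied by Property~\ref{prop:Lyapunov}, in the spirit of an ISS small-gain composition, and then invoke a standard Lyapunov theorem for $0$-input GES of hybrid systems (e.g.\ the one in \cite{teel2012lyapunov}) together with a dissipation/comparison argument for the $\mathcal{L}_2$ bound. Concretely, I would set $V(x)\coloneqq \mathscr{V}_1(\xcl)+\mathscr{V}_2(\eta,\tau)$. The sandwich bounds \eqref{eq:Sand_1}--\eqref{eq:Sand_2} immediately give $\underline{c}\,|x|_\cA^2\le V(x)\le\overline{c}\,|x|_\cA^2$ for suitable positive constants, using the fact noted in the excerpt that $|x|_\cA=|(\xcl,\eta)|$ on $\cC$.

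For the flow estimate, I would add \eqref{eq:DFlux_1_Design} and \eqref{eq:DFlux_2_Design}, getting
$\langle\nabla V(x),f(x,d)\rangle\le -\rho_1(\xcl)+\rho_2(\eta)-\sigma_1(\eta)+\sigma_2(\xcl)+\rho_3(\xcl,d)+\sigma_3(d)$,
and then use the coupling inequalities \eqref{eq:DFlux_1_Design_2}--\eqref{eq:DFlux_3_Design} to collapse the cross terms, obtaining
$\langle\nabla V(x),f(x,d)\rangle\le -k_{v_1}|\xcl|^2-k_{v_2}|\eta|^2 - |y_o|^2 + \gamma^2|d|^2$
for all $(x,d)\in\cC\times\R^{n_d}$. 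Setting $k\coloneqq\min\{k_{v_1},k_{v_2}\}$ this is $\le -k|x|_\cA^2 - |y_o|^2+\gamma^2|d|^2$, which with the quadratic sandwich yields a dissipation inequality of the form $\langle\nabla V(x),f(x,d)\rangle\le -\lambda_c V(x) - |y_o|^2 + \gamma^2|d|^2$ during flow with $\lambda_c\coloneqq k/\overline{c}$. For the jump estimate, observe that $G(x)=(\xcl,0,[T_1,T_2])$, so $\xcl$ is unchanged and $\eta$ is reset to $0$; hence $\mathscr{V}_1$ is unchanged and $\mathscr{V}_2(\eta^+,\tau^+)=\mathscr{V}_2(0,\tau^+)=0\le\mathscr{V}_2(\eta,\tau)$ by \eqref{eq:Sand_2}. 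Therefore $V(G(x))\le V(x)$ for all $x\in\cD$: $V$ does not increase at jumps. Since $y_o$ is unaffected by jumps, the output dissipation also passes trivially through jumps.

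With these ingredients, part $(i)$ follows: the decrease of $V$ along flows (take $d\equiv 0$), the non-increase at jumps, and the persistent flowing enforced by the timer $\tau$ decreasing at unit rate and being reset into $[T_1,T_2]$ (so that between any two jumps at least $T_1$ units of ordinary time elapse, uniformly bounding $j$ in terms of $t$) give an exponential decay estimate in $t$, which upgrades to decay in $t+j$ by the standard trick of combining the $t$-decay with the dwell-time lower bound $T_1$. Completeness of maximal solutions follows because solutions can always flow (the flow set is forward invariant under $f$ restricted to $\cC$, as $\tau$ only needs to reach $0$ before a forced jump) and the reset keeps $\tau$ in $[T_1,T_2]$, so there is no finite escape and no termination. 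For part $(ii)$, I would integrate the flow dissipation inequality $\dot V\le -|y_o|^2+\gamma^2|d|^2$ over each flow interval of the hybrid time domain, sum over jumps using $V(G(x))\le V(x)$, and telescope to get $\int_\cI |y_o|^2\le V(\phi(0,0)) + \gamma^2\int_\cI|d|^2$; taking square roots and using $\sqrt{a+b}\le\sqrt a+\sqrt b$ together with $V(\phi(0,0))\le\overline{c}\,|\phi(0,0)|_\cA^2$ yields \eqref{eq:defL2Stability} with $\alpha\coloneqq\sqrt{\overline{c}}$.

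The main obstacle I anticipate is not any single inequality but the bookkeeping needed to turn the flow/jump dissipation into the clean $t+j$ exponential bound of Definition~\ref{def:expIOstability} and into the summed $\mathcal{L}_2$ bound over an arbitrary hybrid time domain: one must carefully handle hybrid time domains with accumulation of jumps ruled out by $T_1>0$, partial final flow intervals, and the possibility that $\dom_t\phi$ is a proper subset of $\R_{\ge0}$. A secondary technical point is justifying the differentiation of $V$ along solutions only almost everywhere and invoking the appropriate hybrid invariance/comparison lemma from \cite{goebel2012hybrid} or \cite{teel2012lyapunov} rather than re-deriving it; I would cite those results and keep the argument at the level of verifying their hypotheses.
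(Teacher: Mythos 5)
Your proposal is correct and follows essentially the same route as the paper: the authors also form the composite Lyapunov function $\mathscr{V}(x)=\mathscr{V}_1(\xcl)+\mathscr{V}_2(\eta,\tau)$, derive the sandwich bounds, show non-increase at jumps (since $\eta$ resets to zero and $\xcl$ is unchanged), sum the flow inequalities and apply \eqref{eq:DFlux_1_Design_2}--\eqref{eq:DFlux_3_Design} to obtain $\langle\nabla\mathscr{V}(x),f(x,d)\rangle\le-2\lambda_t\mathscr{V}(x)-|y_o|^2+\gamma^2|d|^2$, and then integrate, using a dwell-time lemma (their Lemma~\ref{lemma:tj}, based on $T_1>0$) to convert the $t$-decay into the $(t+j)$-decay of Definition~\ref{def:expIOstability}, and the same telescoping/square-root argument for the $\mathcal{L}_2$ bound. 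Your value $\alpha=\sqrt{\overline{c}}$ is in fact the correct constant (the paper states $\alpha=\chi_2$ where $\sqrt{\chi_2}$ is meant), so no changes are needed.
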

The proof of Theorem~\ref{theo:LES} is given in Appendix~\ref{sec:Appendix}. 
\begin{remark}
In principle, sufficient conditions for the solution to Problem~\ref{prob:stability} could be derived by following a similar approach as in \cite{MercoLCSS}. However, because of the coupling between the states $\bar{x}$ and $\eta$, this approach leads to conditions that are difficult to handle from a numerical standpoint. This often happens in the construction of Lyapunov functions for feedback interconnections and is the key factor leading to small-gain approaches; see, e.g., \cite{peaucelle2014lmi}.     
\end{remark}
\begin{remark}
Although inputs to \eqref{eq:systemH} are represented by hybrid signals, any purely continuous-time signal $t\mapsto\hat{w}(t)$ can be converted into a hybrid signal $w$ on a given hybrid time domain $\mathcal{E}$ by defining $w(t, j)=\hat{w}(t)$  for each $(t, j)\in \mathcal{E}$.  
\end{remark}

\begin{remark}
As opposed to \cite{briat2013convex}, the stability conditions in Theorem~\ref{theo:LES} do not depend on the value of $T_1$, \textcolor{black}{which is only required to be strictly positive for Theorem~\ref{theo:LES} to hold\footnote{The proof of Theorem~\ref{theo:LES} shows that $T_1$ has an impact on the rate of exponential convergence towards the attractor $\cA$ in \eqref{eq:setA}.}}. This is due to the fact that, by construction, the Lyapunov function $\mathscr{V}$ employed in the proof of Theorem~\ref{theo:LES} does not increase at jumps \textcolor{black}{and, for any maximal solution to $\eqref{eq:systemH}$, the length of flow intervals is lower bounded by $T_1$.}  
Although this introduces some conservatism, following this approach leads to conditions that are easier to handle for controller design.
\end{remark}

With the purpose of deriving constructive design algorithms for the controller and the holding device, we perform a particular choice for the functions $\mathscr{V}_1$ and $\mathscr{V}_2$ in Property~\ref{prop:Lyapunov}. In particular, let $P_1\in\mathsf{S}_+^{n_p+n_c}$, $P_2\in\mathsf{S}_+^{n_y}$, and $\delta$ a positive real number. Inspired by \cite{ferrante2019hybrid}, we operate the following selection:
\begin{equation} \label{eq:W1&2}
\mathscr{V}_1(\xcl) \coloneqq \xcl^\top P_1 \xcl, \quad \mathscr{V}_2(\eta,\tau) \coloneqq e^{\delta \tau} \eta^\top P_2 \eta.
\end{equation}
\subsection{Quadratic Analysis Conditions}
The structure of the selected functions $\mathscr{V}_1$ and $\mathscr{V}_2$ allows one to provide sufficient conditions for stability properties required in Problem~\ref{prob:stability} in the form of matrix inequalities.  This is formalized in the result given next.
\begin{proposition} \label{proposition:LMI}
If there exist $P_1, S, R\in\mathsf{S}_+^{n_p+n_c}$, $P_2, Q, O\in\mathsf{S}_+^{n_y}$, positive real numbers $\delta,\gamma_1,\gamma_2$, and matrices $A_c\in\R^{n_c\times n_c}$, $B_c\in\R^{n_c\times n_y}$, $C_c\in\R^{n_u\times n_c}$, $D_c\in\R^{n_u\times n_y}$, $H\in\R^{n_y\times n_y}$, and $E\in\R^{n_y\times n_c}$, such 
that 
\begin{subequations}
\begin{equation} \label{eq:LMIOtherCondQ}			
Q-O\nd,
\end{equation}
\begin{equation} \label{eq:LMIOtherCondR}
R-S\nd,
\end{equation}
\begin{equation} \label{eq:LMI_M1}
\cM_1\coloneqq\left[
\begin{array}{ccc}
\He(P_1 \bA)+S+C_o^\top C_o & P_1 \bB & P_1 \bV \\
\bullet & -Q & 0 \\
\bullet & \bullet & -\gamma_1 \Id
\end{array}
\right] \nsd,
\end{equation}
\begin{equation} \label{eq:LMI_M2}
\cM_2(0)\nsd, \quad \cM_2(T_2) \nsd,
\end{equation}
\begin{equation} \label{eq:LMI_gammas}
\gamma_1+\gamma_2 \leq \gamma^2,
\end{equation}
\end{subequations}
where for all $\tau\in [0, T_2]$
\begin{equation}\label{eq:M2}
\begin{aligned}
&\cM_2(\tau) \coloneqq\scalebox{0.9}{$\left[
\begin{array}{ccc}
(\He(P_2 \mathbb{H})-\delta P_2)e^{\delta\tau} + O& P_2 \bJ e^{\delta\tau} &  P_2 \bW e^{\delta\tau} \\
\bullet & -R & 0 \\
\bullet & \bullet & -\gamma_2 \Id
\end{array}
\right].$}
\end{aligned}
\end{equation}
Then, Property~\ref{prop:Lyapunov} holds.
\end{proposition}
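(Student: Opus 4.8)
The plan is to verify, one condition at a time, that with the quadratic/exponential choice $\mathscr{V}_1(\xcl)=\xcl^\top P_1\xcl$ and $\mathscr{V}_2(\eta,\tau)=e^{\delta\tau}\eta^\top P_2\eta$ every inequality in Property~\ref{prop:Lyapunov} follows directly from one of the matrix inequalities in the statement, the auxiliary functions being read off from the diagonal and cross blocks of $\cM_1$ and $\cM_2$. Concretely I would set $\rho_1(\xcl):=\xcl^\top S\xcl$, $\rho_2(\eta):=\eta^\top Q\eta$, $\rho_3(\xcl,d):=-\xcl^\top C_o^\top C_o\xcl+\gamma_1 d^\top d$, and $\sigma_1(\eta):=\eta^\top O\eta$, $\sigma_2(\xcl):=\xcl^\top R\xcl$, $\sigma_3(d):=\gamma_2 d^\top d$. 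Since $S\in\mathsf{S}_+^{n_p+n_c}$ and $O\in\mathsf{S}_+^{n_y}$, the functions $\rho_1,\sigma_1$ are positive definite as required, and $\mathscr{V}_1,\mathscr{V}_2$ are obviously continuously differentiable.

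For the sandwich bounds, take $\underline{c}_{v_1}:=\lambda_{\min}(P_1)$ and $\overline{c}_{v_1}:=\lambda_{\max}(P_1)$, which gives \eqref{eq:Sand_1}. For \eqref{eq:Sand_2}, the timer component never leaves $[0,T_2]$ along solutions, so $1\le e^{\delta\tau}\le e^{\delta T_2}$ and one takes $\underline{c}_{v_2}:=\lambda_{\min}(P_2)$, $\overline{c}_{v_2}:=e^{\delta T_2}\lambda_{\max}(P_2)$; if a bound literally valid on all of $\R^{n_y+1}$ is insisted upon, one may replace $\tau$ by its saturation to $[0,T_2]$ in the definition of $\mathscr{V}_2$, which is harmless since the flow-derivative inequality \eqref{eq:DFlux_2_Design} is imposed only for $\tau\in[0,T_2]$.

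The core of the argument is the identification of the flow derivatives with the quadratic forms of $\cM_1$ and $\cM_2(\tau)$. One computes $\langle\nabla\mathscr{V}_1(\xcl),\bA\xcl+\bB\eta+\bV d\rangle=2\xcl^\top P_1(\bA\xcl+\bB\eta+\bV d)$, and pre/post-multiplying \eqref{eq:LMI_M1} by $(\xcl,\eta,d)$ shows that this equals the quadratic form of $\cM_1$ plus $-\rho_1(\xcl)+\rho_2(\eta)+\rho_3(\xcl,d)$; since $\cM_1\nsd$ this gives \eqref{eq:DFlux_1_Design}. Similarly, $\langle\nabla\mathscr{V}_2(\eta,\tau),(\mathbb{H}\eta+\bJ\xcl+\bW d,-1)\rangle=e^{\delta\tau}\big(\eta^\top(\He(P_2\mathbb{H})-\delta P_2)\eta+2\eta^\top P_2\bJ\xcl+2\eta^\top P_2\bW d\big)$, which equals the quadratic form of $\cM_2(\tau)$ in $(\eta,\xcl,d)$ plus $-\sigma_1(\eta)+\sigma_2(\xcl)+\sigma_3(d)$, because the $O$-, $R$-, and $\gamma_2$-blocks of $\cM_2(\tau)$ carry no $e^{\delta\tau}$ factor; hence $\cM_2(\tau)\nsd$ yields \eqref{eq:DFlux_2_Design}. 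The one step that needs more than bookkeeping is that \eqref{eq:LMI_M2} only assumes $\cM_2(0)\nsd$ and $\cM_2(T_2)\nsd$, whereas \eqref{eq:DFlux_2_Design} is needed for every $\tau\in[0,T_2]$: since $\tau\mapsto\cM_2(\tau)$ is affine in the scalar $e^{\delta\tau}$, which ranges over $[1,e^{\delta T_2}]$, the set $\{\cM_2(\tau):\tau\in[0,T_2]\}$ is the line segment joining $\cM_2(0)$ and $\cM_2(T_2)$, and negative semidefiniteness is preserved along it by convexity of the cone of negative semidefinite matrices.

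Finally, the three coupling inequalities are immediate with the chosen functions: \eqref{eq:DFlux_1_Design_2} reads $\xcl^\top(R-S)\xcl\le-k_{v_1}|\xcl|^2$, true with $k_{v_1}:=\lambda_{\min}(S-R)>0$ thanks to \eqref{eq:LMIOtherCondR}; \eqref{eq:DFlux_2_Design_2} reads $\eta^\top(Q-O)\eta\le-k_{v_2}|\eta|^2$, true with $k_{v_2}:=\lambda_{\min}(O-Q)>0$ thanks to \eqref{eq:LMIOtherCondQ}; and \eqref{eq:DFlux_3_Design} reduces to $(\gamma_1+\gamma_2)d^\top d\le\gamma^2 d^\top d$, i.e.\ \eqref{eq:LMI_gammas}. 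This exhausts every condition of Property~\ref{prop:Lyapunov}. I do not expect a genuine obstacle: apart from the convexity reduction for $\cM_2(\tau)$ over $[0,T_2]$ and the harmless saturation remark on $\mathscr{V}_2$, the whole proof is a term-by-term matching of quadratic forms.
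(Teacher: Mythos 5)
Your proposal is correct and follows essentially the same route as the paper's proof: identical choices of $\rho_i,\sigma_i$ and of the sandwich constants, the same term-by-term matching of the flow derivatives with the quadratic forms of $\cM_1$ and $\cM_2(\tau)$, and the same convex-combination argument (affine dependence on $e^{\delta\tau}\in[1,e^{\delta T_2}]$) to pass from the two endpoint conditions in \eqref{eq:LMI_M2} to all $\tau\in[0,T_2]$. Your saturation remark on $\mathscr{V}_2$ merely patches a minor imprecision that the paper glosses over (the bound \eqref{eq:Sand_2} is only needed, and with the paper's constants only holds, for $\tau\in[0,T_2]$) and does not change the argument.
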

\begin{proof}
Let $\mathscr{V}_1$ and $\mathscr{V}_2$ be as defined in \eqref{eq:W1&2},
$\rho_1(\xcl) \coloneqq \xcl^\top S \xcl$, $\rho_2(\eta)\coloneqq\eta^\top Q \eta$, $\rho_3(\xcl,d)\coloneqq  -\xcl^\top C_o^\top C_o \xcl + \gamma_1 d^\top d$, $\sigma_1(\eta)\coloneqq\eta^\top \textcolor{black}{O}  \eta$, $\sigma_2(\xcl) \coloneqq \xcl^\top R \xcl$, $\sigma_3(d) \coloneqq  \gamma_2 d^\top d$.
By selecting $\underline{c}_{v_1}=\lambda_{\min}(P_1)$, $\overline{c}_{v_1}=\lambda_{\max}(P_1)$, $\underline{c}_{v_2}=\lambda_{\min}(P_2)$, and $\overline{c}_{v_2}=\lambda_{\max}(P_2)e^{\delta T_2}$,
conditions \eqref{eq:Sand_1} and \eqref{eq:Sand_2} are respectively satisfied.
Regarding condition \eqref{eq:DFlux_1_Design} of Property~\ref{prop:Lyapunov}, from the definition of the flow map in \eqref{eq:flowMapH}, for each $x\in \cC$, $d\in\R^{n_{d}}$, one can define 
$
\Omega_1(\xcl,\eta,d) \coloneqq\langle \nabla \mathscr{V}_1(\xcl), \bA\xcl+\bB \eta+\bV d \rangle +\xcl^\top (S+C_o^\top C_o) \xcl - \eta^\top Q \eta - \gamma_1 d^\top d = (\xcl,\eta, d)^\top \cM_1 (\xcl,\eta, d)
$,
where the symmetric matrix $\cM_1$ is given in \eqref{eq:LMI_M1}. Therefore, the satisfaction of \eqref{eq:LMI_M1} implies \eqref{eq:DFlux_1_Design}.
Concerning condition \eqref{eq:DFlux_2_Design} of Property~\ref{prop:Lyapunov}, observe that from the definition of the flow map in \eqref{eq:flowMapH}, for each $x\in \cC$, $d\in\R^{n_{d}}$, one can define 
$
\Omega_2(\xcl,\eta,\tau,d)\coloneqq\langle \nabla  \mathscr{V}_2(\xcl), (\mathbb{H}\eta+\bJ \xcl + \bW d, -1) \rangle +\eta^\top O  \eta - \xcl^\top R \xcl - \gamma_2 d^\top d = (\eta,\xcl,d)^\top \cM_2(\tau) (\eta,\xcl, d)
$,
where the symmetric matrix $\cM_2(\tau)$ is given in \eqref{eq:M2} for all $\tau \in [0,T_2]$. Furthermore, notice that it is straightforward to show that there exists $\lambda\colon [0, T_2] \mapsto [0,1]$ such that for each $\tau \in [0,T_2]$, $\cM_2(\tau)=\lambda(\tau)\cM_2(0)+(1-\lambda(\tau))\cM_2(T_2)$; see \cite{ferrante2019hybrid} for further details. Therefore, one has that the satisfaction of \eqref{eq:LMI_M2} implies $\cM_2(\tau)\nsd, \, \forall \tau\in [0,T_2]$, hence \eqref{eq:DFlux_2_Design}.
Concerning conditions \eqref{eq:DFlux_1_Design_2} and \eqref{eq:DFlux_2_Design_2}, select
$k_{w_1}=-\lambda_{\max}(R-S)$, $k_{w_2}=-\lambda_{\max}(Q-O)$
and observe that these quantities are strictly positive due to \eqref{eq:LMIOtherCondR} and \eqref{eq:LMIOtherCondQ}. Hence, one has 
$
\xcl^\top (R-S) \xcl\leq -k_{w_1} \vert \xcl\vert^2
$, 
$
\eta^\top (Q-O) \eta\leq -k_{w_2} \vert \eta\vert^2
$
which, respectively, read as \eqref{eq:DFlux_1_Design_2} and \eqref{eq:DFlux_2_Design_2}.  
To conclude, observe that, due to \eqref{eq:LMI_gammas}, for all $\xcl\in\R^{n_p+n_c}$, $d \in \R^{n_{d}}$ one gets 
$
\rho_3(\xcl, d)+\sigma_3(d)= -\xcl^\top C_o^\top C_o \xcl + (\gamma_1 + \gamma_2) d^\top d \leq -\xcl^\top C_o^\top C_o \xcl + \gamma^2 d^\top d
$
which reads as \eqref{eq:DFlux_3_Design}. This concludes the proof. 
\end{proof}
\section{Controller Design}
\label{sec:ControlDes}
\subsection{Quadratic Design Conditions}
\label{sec:ControlDesSuff}
Proposition~\ref{proposition:LMI} enables to recast the solution to  Problem~\ref{prob:stability} into the feasibility of some matrix inequalities. However, the conditions in Proposition~\ref{proposition:LMI} are nonlinear in the variables $P_1,\,P_2,\,A_c,\,B_c,\,C_c,\,D_c,\,H,\,E$, and $\delta$. As such, those conditions are difficult to exploit from a numerical standpoint to solve Problem~\ref{prob:stability}. In this section, we show that by employing a plant-order controller, i.e., $x_c\in\R^{n_p}$ and by performing a particular selection of the matrices $H$ and $E$, the conditions in Proposition~\ref{proposition:LMI} can be turned into a collection of constraints that can be efficiently handled via semidefinite programming (\emph{SDP}) tools. 

\begin{theorem} \label{th:LMILin}
Given the plant $\cP$ in \eqref{eq:plant}, and positive scalars $\delta, \gamma$, and $ T_2$, suppose there exist $P_2, O, Q\in\mathsf{S}_+^{n_y}$, $R, F, F_i\in\mathsf{S}_+^{2n_p}$,  $X,Y\in\mathsf{S}_+^{n_p}$, $K\in\R^{n_p\times n_p}$, $L\in\R^{n_p\times n_y}$, $M\in\R^{n_u\times n_p}$, $N\in\R^{n_u\times n_y}$, $J\in\R^{n_y\times n_y}$, $Z\in\R^{n_y\times n_p}$, a nonsingular matrix $V\in\R^{n_p\times n_p}$, and positive scalars $\gamma_1,\gamma_2$ such that\footnote{Theorem~\ref{th:LMILin} can be equivalently restated by removing the constraint in \eqref{eq:ConeComp}, i.e., by replacing $F_i$ with $F^{-1}$. However, this formulation of Theorem~\ref{th:LMILin} is more suitable to derive the design algorithm outlined in Section~\ref{sec:DesignLMI}.}:
\begin{subequations} \label{eq:LMILinAllConditions}	
\begin{equation} \label{eq:LMIPhiP1Phi}	
\Theta \coloneqq 
\left[
\begin{array}{cc}
Y & \Id\\
\Id & X
\end{array}
\right] \pd,
\end{equation}
\begin{equation} \label{eq:LMIOtherCondQ_Lin}			
Q-O\nd,
\end{equation}
\begin{equation} \label{eq:LMIOtherCondR_Lin}
R-F_i\nd,
\end{equation}
\begin{equation} \label{eq:ConeComp}
FF_i=I,
\end{equation}
\begin{equation} \label{eq:LMI_M1_lin}
\scalebox{1}{$\widehat{\cM}_1 \coloneqq 
\left[
\begin{array}{ccccc}
\He(\Lambda) & \Pi & \Xi & \Phi^\top & \Phi^\top C_o^\top \\
\bullet & -Q & 0 & 0 & 0 \\
\bullet & \bullet & -\gamma_1 \Id & 0 & 0\\
\bullet & \bullet & \bullet & -F & 0\\
\bullet & \bullet & \bullet & \bullet & -\Id\\
\end{array}
\right]
\nsd,$}
\end{equation}
%\begin{subequations}
%\label{eq:LMI_M2_lin}
\begin{equation} \label{eq:LMI_M2_lin1}
\widehat{\cM}_2(0) \nsd,
\end{equation}
\begin{equation} \label{eq:LMI_M2_lin2}
\widehat{\cM}_2(T_2) \nsd,
\end{equation}
\end{subequations}
\begin{equation} \label{eq:LMI_gammas_lin}
\gamma_1+\gamma_2 \leq \gamma^2,
\end{equation}
%\end{subequations}
where for all $\tau\in[0, T_2]$
\begin{equation}\label{eq:M2_lin}
\begin{aligned}
&\widehat{\cM}_2(\tau) \coloneqq\left[
\begin{array}{ccc}
e^{\delta \tau}(\He(J)-\delta P_2)+O&
e^{\delta \tau} \cM_{12}
& e^{\delta \tau} P_2 \bW 
\\[2mm]
\bullet & -R & 0\\
\bullet & \bullet & -\gamma_2 \Id
\end{array}
\right],
\end{aligned}
\end{equation}
\begin{equation} \label{eq:Phi}
\Phi \coloneqq  \left[
\begin{array}{cc}
Y & \Id\\
V^\top & 0 
\end{array}
\right], \quad 
\cM_{12} \coloneqq \left[\begin{array}{c|c}
P_2 C_p A_p - J C_p  & -Z
\end{array} \right],
\end{equation}
\begin{equation} \label{eq:LamdaPi}
\begin{aligned}
&\Lambda  \coloneqq \! \left[ 
\begin{array}{cc}
A_pY+B_p M & A_p+B_p N C_p \\
K & X A_p+L C_p
\end{array}
\right],\Pi  \coloneqq - \left[
\begin{array}{c}
B_p N \\ L
\end{array}
\right],\\
&\Xi \coloneqq \left[
\begin{array}{c}
W_p \\ X W_p
\end{array}
\right].\\
\end{aligned}
\end{equation}
Then, the matrix $\Id-XY$ is nonsingular.
Let $U\in\R^{n_p\times n_p}$ be any nonsingular matrix such that 
\begin{equation} \label{eq:Identity}
XY+UV^\top = \Id.
\end{equation}
In turn, the conditions in Proposition~\ref{proposition:LMI} are satisfied. In particular, Property~\ref{prop:Lyapunov} holds and selecting the controller and holding parameters defined in \eqref{eq:contr_params}  as in \eqref{eq:designResults} (at the top of the page) solves Problem~\ref{prob:stability}.
\end{theorem}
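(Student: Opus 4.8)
The plan is to prove Theorem~\ref{th:LMILin} by the classical linearizing change of variables for dynamic output--feedback synthesis (see, e.g., \cite{scherer1997multiobjective}), augmented so as to absorb the extra holding-device matrices $H,E$ and the hybrid block $\cM_2(\cdot)$. The target is to exhibit matrices $P_1,S,R\in\mathsf{S}_+^{2n_p}$, $P_2,Q,O\in\mathsf{S}_+^{n_y}$, scalars $\delta,\gamma_1,\gamma_2>0$, and parameters $A_c,B_c,C_c,D_c,H,E$ satisfying every inequality of Proposition~\ref{proposition:LMI}; from that proposition Property~\ref{prop:Lyapunov} follows, and then Theorem~\ref{theo:LES} shows that the resulting closed loop solves Problem~\ref{prob:stability}.

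First I would settle the nonsingularity claim and reconstruct a full Lyapunov matrix $P_1$ from the reduced data $X,Y,V$. A Schur complement applied to \eqref{eq:LMIPhiP1Phi} gives $X,Y\in\mathsf{S}_+^{n_p}$ and $Y-X^{-1}\pd$; since $XY-\Id=X(Y-X^{-1})$ is the product of two nonsingular matrices, $\Id-XY$ is nonsingular, which is the first assertion. As $V$ is nonsingular by hypothesis, set $U\coloneqq(\Id-XY)V^{-\top}$, so that \eqref{eq:Identity} holds, and introduce $\Phi$ as in \eqref{eq:Phi} together with $\Psi\coloneqq\left[\begin{smallmatrix}\Id&X\\0&U^\top\end{smallmatrix}\right]$. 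Using \eqref{eq:Identity} one checks $\Phi^\top\Psi=\Psi^\top\Phi=\Theta$; hence $\Phi$ is nonsingular, $P_1\coloneqq\Psi\Phi^{-1}$ is symmetric, and $\Phi^\top P_1\Phi=\Theta\pd$ forces $P_1\pd$. From $P_1\Phi=\Psi$ one reads off the key identities $\Phi^\top P_1\bA\Phi=\Lambda$, $\Phi^\top P_1\bB=\Pi$, $\Phi^\top P_1\bV=\Xi$, and $C_o\Phi=[C_{op}Y\quad C_{op}]$, with $\Lambda,\Pi,\Xi$ as in \eqref{eq:LamdaPi}; these are precisely the substitutions that linearize the $\bar x$--block.

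Next I would invert the change of variables to recover all parameters. Reading off the blocks of $\Lambda$ and $\Pi$ yields $D_c=N$, then $C_c$ from $M=D_cC_pY+C_cV^\top$, then $B_c$ from the identity defining $\Pi$ (equivalently, from $L$), and finally $A_c$ from the remaining block of $\Lambda$; these are the formulae gathered in \eqref{eq:designResults}, well defined because $U$ and $V$ are invertible. For the holding device I would implement the ``particular selection'' announced in the statement: choose $H$ so that $P_2\mathbb{H}$ coincides with the free variable $J$ and choose $E$ so that $P_2\bJ_2$ equals $-Z$ --- equivalently, $H$ and $E$ are the corresponding affine expressions in the already-solved variables. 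Together with $D_c=N$ this turns $P_2\bJ$ into $\cM_{12}$ and the $\eta$-dissipation block into an affine one, i.e.\ $\cM_2(\tau)=\widehat{\cM}_2(\tau)$ for every $\tau\in[0,T_2]$, with $\cM_{12}$ as in \eqref{eq:Phi}. Finally set $S\coloneqq F_i=F^{-1}$, which is legitimate by \eqref{eq:ConeComp}.

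It then remains to verify the conditions of Proposition~\ref{proposition:LMI}. Conditions \eqref{eq:LMIOtherCondQ} and \eqref{eq:LMI_gammas} are \eqref{eq:LMIOtherCondQ_Lin} and \eqref{eq:LMI_gammas_lin} verbatim, and \eqref{eq:LMIOtherCondR} follows from \eqref{eq:LMIOtherCondR_Lin} with $S=F_i$. Applying the congruence $\Diag(\Phi,\Id,\Id)$ to $\cM_1$ and using the key identities turns its leading block into $\He(\Lambda)+\Phi^\top F^{-1}\Phi+\Phi^\top C_o^\top C_o\Phi$; two Schur complements (with respect to $F\pd$ and $\Id\pd$) then show $\cM_1\nsd$ if and only if $\widehat{\cM}_1\nsd$, so \eqref{eq:LMI_M1_lin} yields \eqref{eq:LMI_M1}. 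Likewise \eqref{eq:LMI_M2_lin1}--\eqref{eq:LMI_M2_lin2} yield \eqref{eq:LMI_M2} via $\cM_2(\tau)=\widehat{\cM}_2(\tau)$ together with the convexity argument ($\cM_2(\tau)$ is a convex combination of $\cM_2(0)$ and $\cM_2(T_2)$) already used in Proposition~\ref{proposition:LMI}. Hence Proposition~\ref{proposition:LMI} applies, Property~\ref{prop:Lyapunov} holds, and Theorem~\ref{theo:LES} delivers (P1)--(P2). I expect the main obstacle to be the bookkeeping of these last steps: proving that the completion $P_1=\Psi\Phi^{-1}$ is genuinely a symmetric positive definite Lyapunov matrix, and that the chosen $H,E$ really render $\cM_2(\cdot)$ affine (the part that goes beyond textbook $\mathcal{H}_\infty$ synthesis), all while keeping the congruence-plus-Schur manipulation of $\cM_1$ in exact agreement with the stated form of $\widehat{\cM}_1$.
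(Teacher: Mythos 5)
Your proposal is correct and follows essentially the same route as the paper's own proof: the Scherer-type completion of $P_1$ from $(X,Y,U,V)$ satisfying $\Theta=\Phi^\top P_1\Phi$, the selection $S=F^{-1}$ via the cone-complementarity constraint, the congruence-plus-Schur equivalence between $\cM_1$ and $\widehat{\cM}_1$, and the decoupling choice of $H$ and $E$ that turns $\cM_2(\cdot)$ into the affine $\widehat{\cM}_2(\cdot)$. The only deviations are cosmetic --- you prove nonsingularity of $\Id-XY$ by a Schur-complement factorization instead of the paper's determinant identity, and you define $P_1=\Psi\Phi^{-1}$ implicitly rather than by the explicit block formula \eqref{eq:P1Definition} (the two coincide) --- apart from your sign convention $P_2\mathbb{H}=J$ for $H$, which is opposite to that of \eqref{eq:designResults} but harmless since $J$ is a free decision variable.
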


\begin{figure*}
\begin{equation}
\label{eq:designResults} 
\scalebox{1}{
$
\begin{aligned}
&\left[\begin{array}{c|c}
A_c&B_c\\
\hline
C_c&D_c
\end{array}\right] = \left[
\begin{array}{cc}
U^{-1} & -U^{-1} X B_p\\ 0 & \Id
\end{array}\right]
\left[
\begin{array}{c|c}
K-X A_p Y & L\\ \hline M & N
\end{array}\right]
\left[
\begin{array}{cc}
V^{-\top} & 0\\ -C_p Y V^{-\top} & \Id
\end{array}\right]\\
&\left[\begin{array}{c|c}
E&H
\end{array}\right] = \left[ 
\begin{array}{c|c}
C_pB_pC_c+P_2^{-1} Z & C_pB_pD_c+P_2^{-1} J
\end{array} \right]
\end{aligned}
$}
\end{equation}
\end{figure*}
\begin{proof}
Nonsingularity of $\Id-XY$ follows from \eqref{eq:LMIPhiP1Phi}. Indeed, from \cite[Proposition 2.8.3, page 116]{bernstein2005matrix} one has
$
\det\Theta=\det (Y)\det(X-Y^{-1})
$,
which by using the symmetry of $X$ and $Y$, via some simple algebra, yields 
$
\det\Theta=\det(YX-\Id)=(-1)^{n_p}\det(\Id-XY)
$. The remainder of the proof aims at showing that the hypotheses of the theorem imply all the conditions in the Proposition~\ref{proposition:LMI}. After a preliminary step, the satisfaction of \eqref{eq:LMIOtherCondR}, \eqref{eq:LMI_M1}, and \eqref{eq:LMI_M2} is shown.

\noindent
\textbf{Preliminary step.} Next, we select $S=F^{-1}$ and
\begin{align} \label{eq:P1Definition}
&P_1=\left[
\begin{array}{cc}
X & U \\ U^\top & -V^{-1}(Y-YXY)V^{-\top}
\end{array}
\right]
\end{align}

\noindent
\textbf{Proof of $P_1\pd$.} 
Let $P_1$ be selected as in \eqref{eq:P1Definition}. Notice that $\Phi$ in \eqref{eq:Phi} is nonsingular due to $V$ being nonsingular. Using \eqref{eq:Identity}, it can be shown that $\Theta=\Phi^{\top} P_1 \Phi$. Hence, \eqref{eq:LMIPhiP1Phi} implies $P_1\pd$.\\

\noindent
\textbf{Proof of \eqref{eq:LMIOtherCondR}.} 
Combining \eqref{eq:LMIOtherCondR_Lin} and \eqref{eq:ConeComp} yields
$
R-F^{-1}\nd
$,
which reads as \eqref{eq:LMIOtherCondR} with $S=F^{-1}$.\\

\noindent
\textbf{Proof of \eqref{eq:LMI_M1}.} By following an approach similar to \cite{scherer1997multiobjective}, we show that \eqref{eq:LMI_M1_lin} is equivalent to \eqref{eq:LMI_M1} for the proposed selection of the controller parameters and of the variables $P_1$ and $S=F^{-1}$. By Schur complement, \eqref{eq:LMI_M1} is equivalent to
\begin{equation} \label{eq:shur}
\overline{\cM}_1\coloneqq\left[
\begin{array}{ccccc}
\He(P_1 \bA) & P_1 \bB & P_1 \bV & \Id & C_o^\top  \\
\bullet & -Q & 0 & 0 & 0 \\
\bullet & \bullet & -\gamma_1 \Id & 0 & 0\\
\bullet & \bullet & \bullet & -F & 0\\
\bullet & \bullet & \bullet & \bullet & -\Id\\
\end{array}
\right]\nsd.
\end{equation}
Define
\begin{equation} \label{eq:congruence}
\begin{aligned}
\widetilde{\cM}_1\coloneqq & \Diag\{\Phi^\top,\Id\} \overline{\cM}_1 \Diag\{\Phi,\Id\} \\
= & 
\scalebox{0.9}{$\left[
\begin{array}{ccccc}
\He(\Phi^\top P_1 \bA\Phi) & \Phi^\top P_1 \bB & \Phi^\top P_1 \bV & \Phi^\top & \Phi^\top C_o^\top \\
\bullet & -Q & 0 & 0 & 0 \\
\bullet & \bullet & -\gamma_1 \Id & 0 & 0\\
\bullet & \bullet & \bullet & -F & 0\\
\bullet & \bullet & \bullet & \bullet & -\Id\\
\end{array}
\right].$}
\end{aligned}
\end{equation}
Notice that $\widetilde{\cM}_1$ differs from $\widehat{\cM}_1$ in \eqref{eq:LMI_M1_lin} only in the entries $(1,1)$, $(1,2)$, $(1,3)$, and their transposed $(2,1)$ and $(3,1)$.
Before showing that $\Lambda=\Phi^\top P_1 \bA\Phi$, $\Pi=\Phi^\top P_1 \bB$, and $\Xi=\Phi^\top P_1 \bV$, we first invert the left equation in \eqref{eq:designResults} as
\begin{equation} \label{eq:designResultsInverted}
\begin{aligned}
&\left[
\begin{array}{c|c}
K-X A_p Y & L\\ \hline M & N
\end{array}\right]\\
&\quad\quad\qquad=\left[
\begin{array}{cc}
U & X B_p\\ 0 & \Id
\end{array}\right]
\left[
\begin{array}{c|c}
A_c & B_c\\ \hline C_c & D_c
\end{array}\right]
\left[
\begin{array}{cc}
V^\top & 0\\ C_p Y & \Id
\end{array}\right].
\end{aligned}
\end{equation}
Using \eqref{eq:Identity}, by straightforward calculations one can obtain:
\begin{subequations}
\begin{equation} \label{eq:PhiAPhi}
\begin{aligned}
&\Phi^\top P_1 \bA\Phi = \\
&\!\left[ \!\!
\begin{array}{c|c}
A_pY\!+\!B_p(D_cC_pY\!+\!C_cV^{\top}) & A_p\!+\!B_pD_cC_p\\
\Gamma & XA_p\!+\!(XB_pD_c\!+\!UB_c)C_p
\end{array}\!\!
\right],
\end{aligned}
\end{equation}
\begin{equation} \label{eq:PhiP1}
\Phi^\top P_1 \bB = -\left[
\begin{array}{c}
B_p D_c \\
X B_pD_c+UB_c
\end{array}
\right], \,
\Phi^\top P_1 \bV = -\left[
\begin{array}{c}
W_p \\
X W_p
\end{array}
\right],
\end{equation}
\end{subequations}
where $\Gamma\coloneqq X(A_p+B_pD_cC_p)Y+U(B_cC_pY+A_c V^{\top})+XB_pC_cV^{\top}$.
By employing \eqref{eq:designResultsInverted}, equations \eqref{eq:PhiAPhi} and \eqref{eq:PhiP1} read as, respectively, $\Lambda$, $\Pi$, and $\Xi$ in \eqref{eq:LamdaPi}. 
This shows that \eqref{eq:LMI_M1_lin} is equivalent to \eqref{eq:LMI_M1} for the proposed selection of the controller parameters and variables $P_1$ and $S$.\\
  
\textbf{Proof of \eqref{eq:LMI_M2}.} 
Setting $H=P_2^{-1} J+C_pB_pD_c$ and $E=P_2^{-1} Z+C_pB_pC_c
$ in \eqref{eq:M2} yields \eqref{eq:M2_lin}. This shows that \eqref{eq:M2_lin} is equivalent to  \eqref{eq:M2}. Hence, \eqref{eq:LMI_M2_lin1}-\eqref{eq:LMI_M2_lin2} is equivalent to \eqref{eq:LMI_M2}.

\noindent
To conclude the proof, notice that conditions \eqref{eq:LMI_gammas_lin} and \eqref{eq:LMI_gammas} coincide.
\end{proof}
\begin{remark}
The selection of the parameters $H$ and $E$ proposed in Theorem~\ref{th:LMILin} enables to decouple the holder parameters from the controller ones. This permits the use of the typical change of coordinates/congruence transformations used in output feedback controller design \cite{scherer1997multiobjective}.
\end{remark}
\begin{remark}
Theorem~\ref{th:LMILin} requires matrix $V$ to be nonsingular. Although this constraint is hard to formulate in a matrix inequalities setting, nonsingularity of $V$ can be easily enforced, e.g., by imposing $V+V^\top\pd$. Alternatively, one can leave $V$ unconstrained and, as a second step, slightly perturb it to move away from singularity.
\end{remark}
\subsection{An SDP-based Design Algorithm}
\label{sec:DesignLMI}
The conditions in Theorem~\ref{th:LMILin} are generally hard to handle from a numerical standpoint. In particular, the main sources of difficulty come from the nonlinear dependence on the scalar variable $\delta$ in \eqref{eq:LMI_M2_lin1}-\eqref{eq:LMI_M2_lin2} and on the nonconvexity of
\eqref{eq:ConeComp}. Next, we show how these two issues can be tackled via SDP tools. In particular, the key observation is that when $\delta$ is fixed, 
\eqref{eq:LMIPhiP1Phi}, \eqref{eq:LMIOtherCondQ_Lin}, \eqref{eq:LMIOtherCondR_Lin}, \eqref{eq:LMI_M1_lin}, \eqref{eq:LMI_M2_lin1}, \eqref{eq:LMI_M2_lin2}, \eqref{eq:LMI_gammas_lin} are genuine linear matrix inequalities (\emph{LMIs}). On the other hand, the constraint \eqref{eq:ConeComp} can be handled by relying on the so-called Cone Complementarity (\emph{CC}) algorithm outlined in \cite{EOA97}. The CC algorithm can be applied in our context by relaxing the \emph{nonconvex equality} constraint \eqref{eq:ConeComp} into the following \emph{convex inequality} constraint:
\begin{equation}
\label{eq:ConvexCon}
\begin{bmatrix}
F&\Id\\
\bullet&F_i
\end{bmatrix}\succeq 0.
\end{equation}
At this stage, as in \cite{EOA97}, the idea consists of ``saturating'' the constraint \eqref{eq:ConvexCon} by minimizing\footnote{\textcolor{black}{The idea behind the relaxation proposed in \cite{EOA97} is supported by the results provided in Appendix~\ref{sec:CC_support}.}} $\Tr(FF_i)$.  
Following this approach, the design of a controller solving Problem~\ref{prob:stability} can be recast as the following optimization problem:
\begin{equation} \label{eq:NCSDesignoptProb}
\begin{array}{cl}
\minimize
& \Tr(FF_i)\\
\text{subject to}
&  \eqref{eq:LMIPhiP1Phi},\eqref{eq:LMIOtherCondQ_Lin}, \eqref{eq:LMIOtherCondR_Lin}, \eqref{eq:LMI_M1_lin}, \eqref{eq:LMI_M2_lin1}, \eqref{eq:LMI_M2_lin2},\eqref{eq:LMI_gammas_lin},\eqref{eq:ConvexCon},
\end{array}
\end{equation}
which, when $\delta$ is fixed, can be efficiently solved by using the linearization scheme proposed in \cite{EOA97}. Notice that, as indicated in \cite{EOA97}, solving \eqref{eq:NCSDesignoptProb} does not automatically guarantee the satisfaction of \eqref{eq:ConeComp}, which holds if and only if $\Tr(FF_i)=n_p$. Therefore, the satisfaction of the constraint $R-F^{-1}\nd$ needs to be checked a posteriori.

Concerning the variable $\delta$, unfortunately when this is a decision variable, \eqref{eq:LMI_M2_lin2} turns out to be a nonconvex constraint. This prevents from devising a strategy to determine a feasible value of $\delta$ better than a mere line search. Nonetheless, it is worth to remark that \eqref{eq:LMI_M2_lin1} is quasi-convex, i.e., $\widehat{\mathcal{M}}_2(0)$ is affine in $(J, P_2, R, \gamma_2, Z, O)$ for fixed $\delta$ and it satisfies the monotonicity condition $\lambda\geq\mu \implies$    
  $\widehat{\mathcal{M}}_2(0\vert \lambda)-\widehat{\mathcal{M}}_2(0\vert \mu)\nsd$, 
where, with a slight abuse of notation, $\widehat{\mathcal{M}}_2(0\vert\delta)$ denotes the matrix $\widehat{\mathcal{M}}_2(0)$ for a given value of $\delta$. Hinging upon this observation, it is possible to determine a lower bound $\underline{\delta}$ on $\delta$ such that feasibility of \eqref{eq:NCSDesignoptProb} cannot be guaranteed for $\delta<\underline{\delta}$. In particular, $\underline{\delta}$  can be determined by solving the following optimization problem:
\begin{equation} \label{eq:NCSDeltaCC}\begin{array}{cl}
\minimize
& \delta\\
\text{subject to}
&\mathcal{O}(\delta)<\infty,
\end{array}
\end{equation}
where
$$
\begin{aligned}\mathcal{O}(\delta)\coloneqq\inf\left\{
\Tr(FF_i)\colon \eqref{eq:LMIPhiP1Phi},\eqref{eq:LMIOtherCondQ_Lin}\right.,\\
\left.\eqref{eq:LMIOtherCondR_Lin}, \eqref{eq:LMI_M1_lin}, \widehat{\mathcal{M}}_2(0\vert\delta)\nsd, \eqref{eq:LMI_gammas_lin}, \eqref{eq:ConvexCon}\right\}.
\end{aligned}
$$
Due to the above mentioned quasi-convexity property, \eqref{eq:NCSDeltaCC} can be solved by performing a bisection on $\delta$, while the inner optimization problem in the definition of $\mathcal{O}(\delta)$ can be solved, at each step of the bisection, by relying on the CC algorithm.  
To summarize, \textcolor{black}{by selecting a desired upper bound $\overline{\delta}>0$ on $\delta$ and a resolution $r\in (1,\infty)$ for the line search}, a solution to Problem~\ref{prob:stability} can be obtained via Algorithm~\ref{alg:design}, 
given as follows:
\LinesNumbered
\begin{algorithm}
\SetKw{KwGoTo}{go to}
\caption{Controller design for Problem~\ref{prob:stability}}\label{alg:design}
\KwIn{Plant parameters, $T_2$, $\gamma>0$, $r\in (1, \infty)$, and $\overline{\delta}>0$.}
Solve, using bisection on $\delta$ and the CC algorithm, the optimization problem \eqref{eq:NCSDeltaCC}. 

\textcolor{black}{\eIf{problem \eqref{eq:NCSDeltaCC} is feasible}{store the optimal value of $\delta$\;}{
\KwGoTo line~\ref{alg:error}}}

\Repeat{$\textcolor{black}{\delta>\overline{\delta}}$}{
Given $\delta$ from previous step, by using the CC algorithm, solve the optimization problem \eqref{eq:NCSDesignoptProb}\;
\eIf{problem \eqref{eq:NCSDesignoptProb} is feasible and $R-F^{-1}\prec 0$}{\KwRet{controller parameters \eqref{eq:designResults}}\;}{
$\delta\longleftarrow r\times \delta$\;}
}
{\KwRet{No feasible solution is found}\label{alg:error}}
\end{algorithm}
\section{Numerical Example}
\label{sec:Example}
In this section, we showcase the proposed design approach by considering, as plant, the unicycle model linearized about the origin presented in\footnote{Numerical solutions to LMIs are obtained through the solver \textit{SDPT3} \cite{tutuncu2003solvingSDPT3} and coded in Matlab$^{\tiny{\textregistered}}$ via \textit{YALMIP} \cite{lofberg2004yalmip}. Simulations of hybrid systems are performed in Matlab$^{\tiny{\textregistered}}$ via the \textit{Hybrid Equations (HyEQ) Toolbox} \cite{sanfelice2013toolbox}.} \cite{moarref2014observer}. The state of the unicycle is defined as $x_p=(x_{p1}, x_{p2}, x_{p3})$, where $x_{p1}$ and $x_{p2}$ are, respectively, the heading angle and its time derivative, and $x_{p3}$ is the distance from the line to follow. The control input $u$ denotes the torque input, the exogenous input $d$ represents a disturbance torque acting on the unicycle's actuator. We assume that the plant measured output is $y=(x_{p1}, x_{p3})$. %Notice that this scenario represents a situation where the controller deployed on the unicycle measures position and orientation by, for example, a remote camera. 
The numerical values of the matrices defining the dynamics of the unicycle are as follows:
$$
\scalebox{0.9}{$
\begin{aligned}
&\left[
\begin{array}{c|c|c|c|c}
A_p & B_p & W_p & C_p^{\top} & C_{op}^{\top}
\end{array}
\right]
=\\
&\hspace{2cm}\left[
\begin{array}{c|c|c|c|c}
\begin{array}{ccc}
0 & 1 & 0\\
0 & -0.01 & 0\\
1 & 0 & 0
\end{array} 
&
\begin{array}{c}
0\\ 1 \\ 0
\end{array}
&
\begin{array}{c}
0\\ 1 \\ 0
\end{array}
&
\begin{array}{cc}
1 & 0\\ 0 & 0 \\ 0 & 1 
\end{array}
&
\begin{array}{c}
0 \\ 0 \\ 1
\end{array}
\end{array}
\right].
\end{aligned}
$}
$$
Assuming that the output $y$ is aperiodically sampled as in \eqref{eq:T1T2} with parameters $T_1=0.1$ and $T_2=1$, we design a controller that stabilizes the unicycle, while reducing the effect of the disturbance torque $d$ on $y_o=x_{p3}$. Solving\footnote{According to standard practice, to avoid the occurrence of overly fast modes in the controller dynamics, some additional constraints on the real part and damping ratio of the eigenvalues of $\mathbb{A}$ have been added in the solution to \eqref{eq:NCSDesignoptProb}.} Problem~\ref{prob:stability} via Algorithm~\ref{alg:design} with $\gamma=10$ yields $\delta=3.1611$ and the following results for the controller and holder matrices as denoted in\footnote{The parameters of Algorithm~\ref{alg:design} are selected as follows: $r=1.1$ and $\overline{\delta}=10$. The tolerance of the bisection in step 1 is $0.1$. All numerical values obtained in the example are reported in Appendix~\ref{app:num}.} \eqref{eq:contr_params}:
\begin{equation}
\label{eq:ctrlDesign}
\begin{aligned}
&\scalebox{0.9}{$
\left[
\begin{array}{c|c}
\begin{array}{ccc}
4.74 & -1.04 & -1.54\\ -106 & 16.8 & 20.1\\ 120 & -20.7 & -25.5
\end{array} 
& 
\begin{array}{cc}
-0.27 & 0.522\\ 3.06 & -5.75\\ -4.01 & 7.6
\end{array}
\\ \hline
\begin{array}{ccc}
-215& 35.7 & 43.5 \end{array} 
&
\begin{array}{cc}
6.84 & -13 
\end{array}
\end{array}\right]$},\\
&\scalebox{0.9}{$\left[ 
\begin{array}{c|c}
\begin{array}{ccc}
-0.0634 & 0.889 & -0.959\\ 0.00323 & -0.0103 & 0.00532
\end{array}
& 
\begin{array}{cc}
-0.0787 & 0.121\\ 0.971 & -0.0211
\end{array}
\end{array} \right].$}
\end{aligned}
\end{equation}
It is interesting to notice that $\spec(H)=\{-0.3935, 0.2937\}$, i.e., the dynamics of the holding device are exponentially unstable. To showcase the performance of the designed controller, in \figurename~\ref{fig:cl_state} we show numerical solutions of the closed-loop system\footnote{In this simulation, transmission intervals are selected between $T_1$ and $T_2$ accordingly to a sinusoidal law with frequency $10.5$.}   \eqref{eq:systemH} without disturbance from the initial condition $x_p(0, 0)=(0.8, 0.1, -0.52)$, $x_c(0, 0)=0$, $\hat{y}(0, 0)=0$, $\tau(0, 0)=T_2$.
 \begin{figure}[thpb]
\centering
\psfrag{t}[][][1]{$t$}
\psfrag{xp}[][][1]{$x_p$}
\psfrag{yh}[][][1]{$\eta$}
\psfrag{u}[][][1]{$u$}
\includegraphics[width=0.95\columnwidth,trim=0.5cm 0.5cm .5cm .5cm, clip]{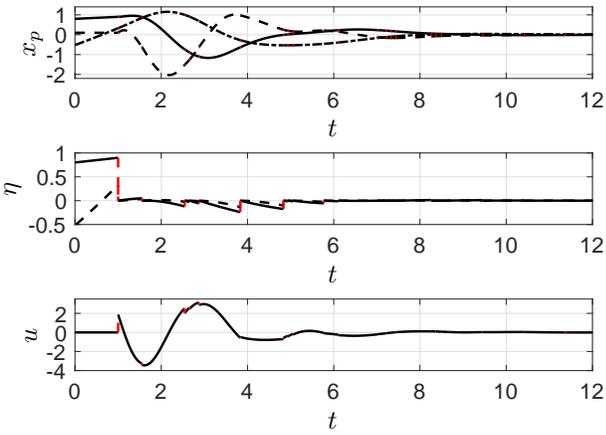}
\caption{Evolution of $x_p$, $\eta$, and $u$ with zero disturbance. Solid, dashed, and dotted lines indicate, respectively, the first, second, and third component of each state.}
\label{fig:cl_state}
\end{figure}
In \figurename~\ref{fig:cl_state_pert}, we report the response of the regulated output $y_o$ from zero initial conditions for the open and closed-loop systems and of the control input to an energy-bounded disturbance. As expected, the proposed controller is effective in reducing  the effect of the disturbance on the output, which for the open-loop system diverges. Concerning the impact of $T_2$ on the smallest achievable gain $\gamma$, numerical tests show that, as long as $\gamma$ is large enough ($\gamma\geq 20$), Algorithm~\ref{alg:design} returns feasibility for $T_2$ up to $1.6$. 
 \begin{figure}[thpb]
\centering
\psfrag{t}[][][1]{$t$}
\psfrag{i}[][][1]{$d, y_o$}
\psfrag{u}[][][1]{$u$}
\includegraphics[width=0.95\columnwidth, trim=0.5cm 0.2cm .5cm 0.2cm, clip]{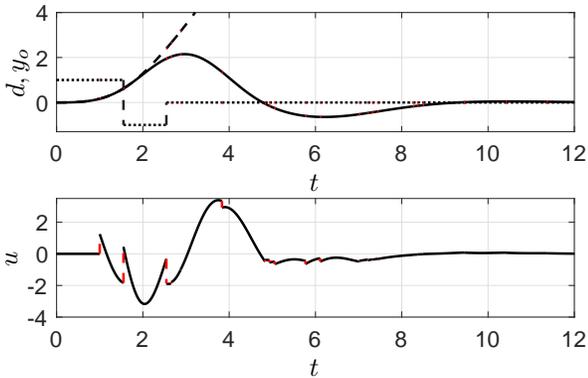}
\caption{Top picture: evolution of $y_o$ for the open-loop (dashed-line) and the closed-loop systems (solid line) in response to the disturbance $d$ (dotted-line) from zero initial conditions. Bottom picture: evolution of the control input.}
\label{fig:cl_state_pert}
\end{figure}
To gauge the impact of the parameter $T_2$ on the holding devices parameters, in \figurename~\ref{fig:norms_HE} we report the norms of $E$ and $H$ versus $T_2$ for $\gamma=10$. The picture shows that increasing $T_2$ leads to a decrease of the norm of the gain $H$, i.e., the holding device tends to ``forget'' measurements faster. On the other hand, the norm of gain $E$ increases as $T_2$ increases.
 \begin{figure}
\centering
\psfrag{T2}[][][1]{$T_2$}
\includegraphics[width=0.95\columnwidth, trim=0.5cm 0.5cm .5cm 0.5cm, clip]{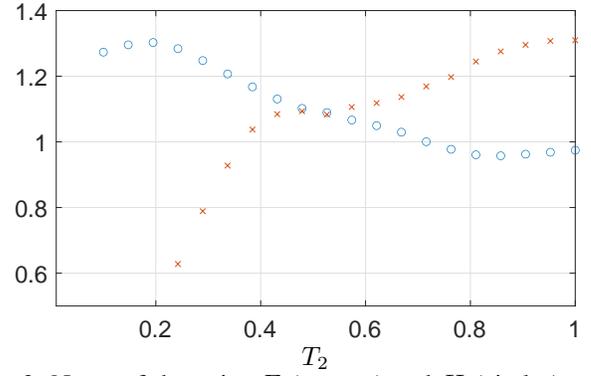}
\caption{Norm of the gains $E$ (crosses) and $H$ (circles) versus $T_2$ with $\gamma=10$.}
\label{fig:norms_HE}
\end{figure}
To further understand the effect of $T_2$, in \figurename~\ref{fig:norms_HE} we represent the maximum real part of $\spec(H)$. The picture shows a very interesting aspect, i.e., increasing $T_2$ leads to unstable dynamics for the holding device.
 \begin{figure}
\centering
\psfrag{lH}[][][1]{}
\psfrag{T2}[][][1]{$T_2$}
\includegraphics[width=0.95\columnwidth, trim=0.5cm 0.5cm .5cm 0.5cm, clip]{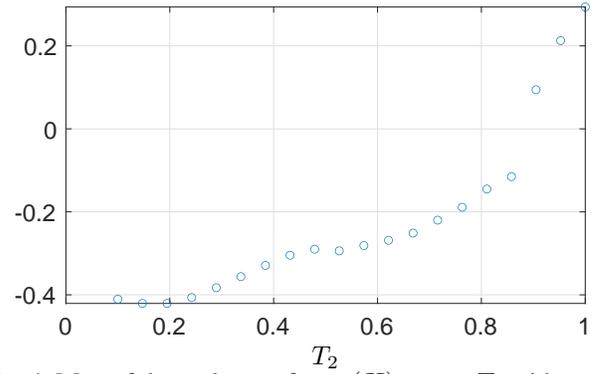}
\caption{Max of the real part of $\spec(H)$ versus $T_2$ with $\gamma=10$.}
\label{fig:norms_HE}
\end{figure}
To show the benefit of the proposed generalized holding device, in  \figurename~\ref{fig:unstable} we report  a simulation of the closed-loop system obtained when the generalized holding device is replaced by a zero-order-holder device. The figure clearly unveils that the closed-loop system becomes unstable. 
 \begin{figure}[thpb]
\centering
\psfrag{t}[][][1]{$t$}
\psfrag{xp}[][][1]{$x_p$}
\psfrag{yh}[][][1]{$\hat{y}$}
\psfrag{u}[][][1]{$u$}
\includegraphics[width=0.5\textwidth,trim=0.5cm 0.5cm .5cm .5cm, clip]{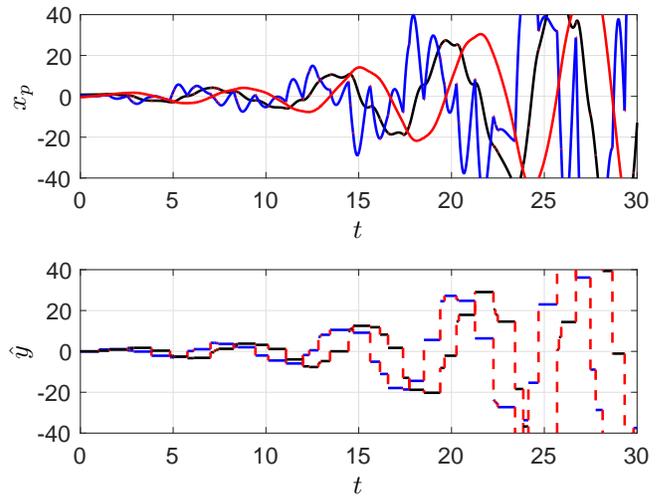}
\caption{Response with zero-order-holder implementation without disturbance from the initial condition $x_p(0, 0)=(0.8, 0.1, -0.52)$, $x_c(0, 0)=0$, $\hat{y}(0, 0)=0$, $\tau(0, 0)=T_2$.}
\label{fig:unstable}
\end{figure}
\section{Conclusion}
In this paper, we studied the problem of designing output feedback controllers for linear time-invariant systems in the presence of measurements that are available in an intermittent aperiodic fashion. In particular, the controller we propose ensures $0$-input global exponential stability and $\mathcal{L}_2$ external stability from plant disturbances to a regulated output, with prescribed $\mathcal{L}_2$-gain. A procedure based on SDP programming tools is proposed for the design of the controller. The effectiveness of the proposed approach is showcased in a numerical example. The results presented in this paper open the door to several interesting future directions. In particular, we envision to adapt the proposed controller architecture in an event-triggered control scheme. In addition, analysis of actuator saturation and robustness with respect to measurement noise for the setup studied in the paper are currently part of our research.
\bibliographystyle{plain}
\bibliography{biblio}

\appendix
\section{Appendix}
\subsection{Technical results and proofs}
\label{sec:Appendix}
{\begin{proof}[Proof of Theorem~\ref{theo:LES}
]
For all $x\in\cX$, define 
$\mathscr{V}(x)\coloneqq \mathscr{V}_1(\xcl)+\mathscr{V}_2(\eta,\tau)$. We prove ($i$) first. Select 
$
\chi_1\coloneqq\min\{\underline{c}_{v_1} ,\underline{c}_{v_2}\}
$, 
$
\chi_2\coloneqq\max\{\overline{c}_{v_1} ,\overline{c}_{v_2}\}
$.
Then, using \eqref{eq:Sand_1} and 
\eqref{eq:Sand_2} one gets
\begin{equation}
\label{eq:SandwichTotal}
\chi_1\vert x\vert^2_\mathcal{A}\leq \mathscr{V}(x)\leq \chi_2\vert x\vert^2_\mathcal{A} \qquad \forall x \in \cC \, \cup \,  \cD
\end{equation}
Moreover, by using \eqref{eq:Sand_2}, for all $x =(\xcl,\eta,\tau) \in\cD$ and $\bar{g}\in G(x)$, one has:
\begin{equation} \label{eq:Vjumps}
\mathscr{V}(\bar{g})-\mathscr{V}(x)=-\mathscr{V}_2(\eta,0)\leq 0
\end{equation}
where the first equality above follows from the fact that the value of $\xcl$ does not change at jumps.

Now observe that, from \eqref{eq:DFlux_1_Design} and \eqref{eq:DFlux_2_Design}, for all $x\in\cC$, $d\in \R^{n_{d}}$
$$
\begin{aligned}
&\langle \nabla \mathscr{V}(x), f(x,d)\rangle\\
&=\!\langle\nabla \mathscr{V}_1(\xcl), \bA\xcl\!\!+\bB \eta+\bV d \rangle\!+\!\langle\nabla \mathscr{V}_2(\eta,\tau), H \eta+\bJ \xcl\!+\!\bW d\rangle\\
&\!\!\!\overset{\eqref{eq:DFlux_1_Design},\eqref{eq:DFlux_2_Design}}{\leq} \!\!\!\!\!\!\! -\rho_1(\xcl)+\rho_2(\eta)-\sigma_1(\eta)+\sigma_2(\xcl)+\rho_3(\xcl,d)+\sigma_3(d)\\
&\!\!\!\overset{\eqref{eq:DFlux_1_Design_2},\eqref{eq:DFlux_2_Design_2},\eqref{eq:DFlux_3_Design}}{\leq} \!\!\!\!\!\!\!\!\!\!-k_{v_1} \vert \xcl\vert^2\! -\!k_{v_2} \vert \eta\vert^2 \!-\!\xcl^\top C_o^\top C_o \xcl \!+\! \gamma^2 d^\top d
\end{aligned}
$$
Then, for all $x\in\cC$, $d\in \R^{n_{d}}$
$$
\begin{aligned}
\langle \nabla \mathscr{V}(x),\! f(x,d)\rangle\!\leq\!\! -\!\underbrace{\min\{k_{v_1}\!,\! k_{v_2}\!\}}_{\chi_3}\!\vert x\vert^2_\mathcal{A}  \!-\! \xcl^\top C_o^\top\! C_o \xcl \!+\! \gamma^2 d^{\!\top} \!d
\end{aligned}
$$
Using \eqref{eq:SandwichTotal}, the above relationship yields for all $x\in\cC$, $d\in \R^{n_{d}}$
\begin{equation} \label{eq:Vflows}
\begin{aligned}
\langle \nabla \mathscr{V}(x), f(x,d)\rangle\leq -2 \lambda_t \mathscr{V}(x) -\xcl^\top C_o^\top C_o \xcl + \gamma^2 d^\top d
\end{aligned}
\end{equation}
where $\lambda_t \coloneqq \frac{\chi_3}{2\chi_2}$. We are now ready to show item $(i)$. Let $(\phi, 0)$ be a solution pair to $\mathcal{H}_{cl}$. Then, by combining \eqref{eq:Vjumps} and \eqref{eq:Vflows}, direct integration of $(t, j)\mapsto \mathscr{V}(\phi(t, j))$ yields:
$$
\mathscr{V}(\phi(t, j))\leq e^{-2\lambda_t t}\mathscr{V}(\phi(0, 0)), \quad\forall (t, j)\in\dom\phi$$
which, along with \eqref{eq:SandwichTotal}, gives:
\begin{equation} \label{eq:solRate}
\begin{aligned}
\vert\phi(t, j)\vert_\cA \leq \sqrt{\frac{\chi_2}{\chi_1}}e^{-\lambda_t t} |\phi(0,0)|_\cA \quad\forall (t, j)\in\dom\phi
\end{aligned}
\end{equation}
Finally, using Lemma~\ref{lemma:tj}, one gets that relation \eqref{eq:NCSAnalysisdefeISS} holds with any $\lambda\in\left(0,\frac{\lambda_t T_1}{1+T_1}\right]$, $\kappa=2\sqrt{\frac{\chi_2}{\chi_1}}e^\omega$, where  $\omega\geq\lambda$. Hence, since every maximal solution pair $(\phi, 0)$ to $\mathcal{H}_{cl}$ is complete, $(i)$ is established.

%%%%%%%%%%%%%%%%%%%%%%%%%%%%%%%%%%%%
To establish $(ii)$, we follow a similar approach as in \cite{nevsic2013finite}. 
Let $(\phi, d)$ be a maximal solution pair to $\mathcal{H}_{cl}$. Pick any $t>0$, then thanks to \eqref{eq:Vflows} since, as shown in \eqref{eq:Vjumps}, $\mathscr{V}$ is nonincreasing at jumps, direct integration of $(t, j)\mapsto \mathscr{V}(\phi(t,j))$ yields
\begin{equation*}
\begin{split}
&\mathscr{V}(\phi(t,j))-\mathscr{V}(\phi(0,0))\leq\!\!-2\lambda_t \int_{\mathcal{I}(t)}  \mathscr{V}(\phi(s,j(s)))ds\\
&-\int_{\mathcal{I}(t)}  y_o(s,j(s))^\top  y_o(s, j(s))ds+\gamma^2 \int_{\mathcal{I}(t)} \!\vert d (s, j(s))\vert^2 ds
\end{split}
\end{equation*} 
where $\mathcal{I}(t)\coloneqq[0, t]\cap\dom_t \phi$,
which implies
\begin{equation*}
\begin{split}
\int_{\mathcal{I}(t)} y_o(s,j(s))^\top y_o(s,j(s))ds& \leq \mathscr{V}(\phi(0,0))\\
&+\gamma^2\int_{\mathcal{I}(t)} \vert d(s, j(s))\vert^2 ds
\end{split}
\end{equation*} 
Therefore, by taking the limit for $t$ approaching $\sup\dom_t\phi$, thanks to \eqref{eq:SandwichTotal}, one gets $(ii)$ with $\alpha=\chi_2$. This concludes the proof.
\end{proof}
\begin{lemma}[\cite{ferrante2019hybrid}]
\label{lemma:tj}
	Let $\lambda_t>0$, $T_1>0$,  $\lambda\in\left(0,\frac{\lambda_t T_1}{1+T_1}\right]$, and $\omega\geq\lambda$ and $(\phi, d)$ be a solution pair to $\mathcal{H}_{cl}$. Then, for all $(t, j)\in\dom\phi$
	\begin{equation}
	\label{eq:boundTJ}
	-\lambda_t t\leq \omega-\lambda(t+j)
	\end{equation}
	\QEDB
\end{lemma}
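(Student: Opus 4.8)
The plan is to reduce the claim to an elementary algebraic inequality and supply the one nontrivial ingredient, a minimum dwell-time bound relating $t$ and $j$ along any solution. First I would rewrite the target inequality $-\lambda_t t\leq \omega-\lambda(t+j)$ in the equivalent form $(\lambda-\lambda_t)\,t+\lambda j\leq \omega$. Since the hypothesis $\lambda\in\left(0,\tfrac{\lambda_t T_1}{1+T_1}\right]$ forces $\lambda<\lambda_t$, the coefficient $(\lambda-\lambda_t)$ of $t$ is strictly negative; consequently the left-hand side decreases in $t$ but increases in $j$. The only regime that could violate the bound is therefore a large jump count $j$ accompanied by a small elapsed ordinary time $t$, so everything hinges on a lower bound for $t$ in terms of $j$.

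The key step is to extract that bound from the construction of the timer $\tau$ in $\Hcl$. By the definition of the flow map in \eqref{eq:flowMapH}, $\tau$ decreases at unit rate during flow; by the jump set $\cD=\R^{n_p+n_c+n_y}\times\{0\}$ jumps occur only when $\tau=0$; and by the jump map $G$, after each jump $\tau$ is reset into $[T_1,T_2]$. Hence, writing the ordinary jump times of $(\phi,d)$ as $0\leq\tau_1\leq\tau_2\leq\cdots$, every flow interval after the first has duration at least $T_1$, so $\tau_{k+1}-\tau_k\geq T_1$ for $k\geq 1$. Summing these $j-1$ gaps and using $\tau_1\geq 0$ gives $\tau_j\geq (j-1)T_1$ for any $j\geq 1$. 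Since $(t,j)\in\dom\phi$ implies $t\geq \tau_j$, I obtain $t\geq (j-1)T_1$, equivalently $j\leq 1+\tfrac{t}{T_1}$, for all $(t,j)\in\dom\phi$ with $j\geq 1$ (the case $j=0$ being trivial).

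To conclude, I substitute this bound into the reformulated inequality. Using $\lambda>0$ and $j\leq 1+t/T_1$,
\[
(\lambda-\lambda_t)\,t+\lambda j\;\leq\;\Big(\lambda-\lambda_t+\tfrac{\lambda}{T_1}\Big)\,t+\lambda\;=\;\Big(\lambda\tfrac{1+T_1}{T_1}-\lambda_t\Big)\,t+\lambda.
\]
The coefficient of $t$ is nonpositive precisely when $\lambda\leq\tfrac{\lambda_t T_1}{1+T_1}$, which is the standing hypothesis; since $t\geq 0$, the displayed quantity is bounded above by $\lambda$, and $\lambda\leq\omega$ finishes the estimate. For $j=0$ one has $(\lambda-\lambda_t)t\leq 0\leq\omega$ directly.

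I do not expect a genuine obstacle here, as the statement is a deterministic time-domain inequality rather than a Lyapunov argument. The only point requiring care is the dwell-time bound: one must count only the $j-1$ interjump intervals between the first and the $j$-th jump and discard the initial flow interval, whose length lies in $[0,T_2]$ and may be short or even zero (when $\tau(0,0)=0$); this initial interval is never needed, so the bound $t\geq(j-1)T_1$ holds uniformly. Checking that the chosen threshold on $\lambda$ matches exactly the sign condition on the coefficient of $t$ is the one algebraic identity worth verifying explicitly.
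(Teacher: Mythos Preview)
The paper does not actually prove Lemma~\ref{lemma:tj}; it merely states it and cites \cite{ferrante2019hybrid}. Your argument is correct and is the standard one: the dwell-time bound $j\leq 1+t/T_1$ coming from the timer dynamics, combined with the algebraic rearrangement $(\lambda-\lambda_t)t+\lambda j\leq\omega$, is exactly the route used in the cited reference, so there is nothing to compare.
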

\subsection{Supporting results for the relaxation scheme in Section~\ref{sec:DesignLMI}.}
\label{sec:CC_support}}
\textcolor{black}{\begin{lemma}
Let $F, F_i\in\mathsf{S}_{+}^{n}$ and define
\begin{equation}
\label{eq:MFF}
M(F, F_i)\coloneqq\begin{bmatrix}
F&\Id\\
\Id&F_i
\end{bmatrix}\in\R^{2n\times 2n}.
\end{equation}
Suppose that 
\begin{equation}
\label{eq:Mposlemma}
M(F, F_i)\succeq 0.
\end{equation}
Then,
\begin{equation}
\label{eq:trace_bound_lemma}
\Tr(FF_i)\geq n.
\end{equation}
\end{lemma}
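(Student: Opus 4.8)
The plan is to reduce \eqref{eq:Mposlemma} to a single Loewner inequality via the Schur complement, and then conclude by a trace monotonicity argument. Since $F\in\mathsf{S}_+^n$ is positive definite, hence invertible, the block-matrix inequality $M(F,F_i)\succeq 0$ in \eqref{eq:Mposlemma} is equivalent, by the standard Schur complement characterization applied to the $(1,1)$ block $F$, to
$
F_i-F^{-1}\succeq 0 .
$
This is the only consequence of \eqref{eq:Mposlemma} that I would actually use.

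Next I would invoke the elementary fact that the trace of a product of two positive semidefinite matrices is nonnegative: if $A\succeq 0$ and $B\succeq 0$, then $\Tr(AB)=\Tr(A^{1/2}BA^{1/2})\geq 0$. Applying this with $A=F\succ 0$ and $B=F_i-F^{-1}\succeq 0$ yields $\Tr\!\big(F(F_i-F^{-1})\big)\geq 0$, i.e.\ $\Tr(FF_i)\geq \Tr(FF^{-1})=\Tr(\Id)=n$, which is precisely \eqref{eq:trace_bound_lemma}.

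I do not expect a genuine obstacle here: the argument is short and the only points needing a line of justification are the two auxiliary facts above (the Schur complement equivalence, using $F\succ 0$, and the nonnegativity of $\Tr(AB)$ for $A,B\succeq 0$), both of which are classical. If one prefers to avoid citing the Schur complement lemma explicitly, an equally clean alternative is to write out the quadratic form $(x,y)^\top M(F,F_i)(x,y)=x^\top F x+2x^\top y+y^\top F_i y\geq 0$ for all $x,y$, minimize over $y$ (optimal $y=-F_i^{-1}x$, using $F_i\succ 0$) to obtain $F\succeq F_i^{-1}$, and then apply the same trace monotonicity step to get $\Tr(F_iF)\geq \Tr(F_iF_i^{-1})=n$; I would choose whichever of the two routes is most consistent with the tools already used in the paper.
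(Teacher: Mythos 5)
Your proposal is correct and follows essentially the same route as the paper: a Schur-complement reduction of \eqref{eq:Mposlemma} to a Loewner inequality (you complement on the $F$ block to get $F_i-F^{-1}\succeq 0$, the paper on the $F_i$ block to get $F-F_i^{-1}\succeq 0$), followed by the same square-root/cyclic-trace argument, which you package as the standard fact $\Tr(AB)\geq 0$ for $A,B\succeq 0$ while the paper writes out the congruence by $F_i^{1/2}$ explicitly. No gap; the two versions differ only in which block is complemented and in bookkeeping.
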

\begin{proof}
From \eqref{eq:Mposlemma} and $F_i$ being positive definite, direct application of Schur complement lemma yields:
$$
F-F_i^{-1}\succeq 0.
$$
Let $F_i^{\frac{1}{2}}$ be the principal square root of $F_i$. Then, from the above inequality, a simple congruence transformation gives:
$$
F_{i}^{\frac{1}{2}}FF_{i}^{\frac{1}{2}}-\Id\succeq 0
$$
which in turn, by linearity of the $\Tr$, implies
\begin{equation}
\label{eq:trace_bound_proof}
\Tr(F_{i}^{\frac{1}{2}}FF_{i}^{\frac{1}{2}}-\Id)=\Tr(F_{i}^{\frac{1}{2}}FF_{i}^{\frac{1}{2}})-n\geq 0.
\end{equation}
Finally, by using the cyclic property of the trace, one gets
$$
\Tr(F_{i}^{\frac{1}{2}}FF_{i}^{\frac{1}{2}})=\Tr(FF_i).
$$
The latter, thanks to \eqref{eq:trace_bound_proof}, gives \eqref{eq:trace_bound_lemma}. This establishes the result. 
\end{proof}
\begin{proposition}
Let $F, F_i\in\mathsf{S}_{+}^{n}$ and $M(F, F_i)$ be defined as in \eqref{eq:MFF}.
Then, the following equivalence holds
$$
\underbrace{(M(F, F_i)\succeq 0, \Tr(FF_i)=n)}_{(a)}\iff \underbrace{(M(F, F_i)\succeq 0, FF_i=\Id)}_{(b)} 
$$
\end{proposition}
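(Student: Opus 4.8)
The equivalence splits into the two implications $(b)\Rightarrow(a)$ and $(a)\Rightarrow(b)$. The implication $(b)\Rightarrow(a)$ is immediate: if $FF_i=\Id$, then $\Tr(FF_i)=\Tr(\Id)=n$, so the pair $(M(F,F_i)\succeq 0,\Tr(FF_i)=n)$ holds trivially. All the content is therefore in showing $(a)\Rightarrow(b)$, i.e., that $M(F,F_i)\succeq 0$ together with $\Tr(FF_i)=n$ forces $FF_i=\Id$.

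For $(a)\Rightarrow(b)$ the plan is to reuse the chain of manipulations already carried out in the proof of the preceding Lemma. Since $F_i\in\mathsf{S}_+^n$ is positive definite, from $M(F,F_i)\succeq 0$ the Schur complement lemma gives $F-F_i^{-1}\succeq 0$, and conjugating by the principal square root $F_i^{1/2}$ yields
\begin{equation*}
F_i^{1/2}FF_i^{1/2}-\Id\succeq 0.
\end{equation*}
Call this matrix $N\coloneqq F_i^{1/2}FF_i^{1/2}-\Id$; it is symmetric and positive semidefinite. By the cyclic property of the trace, $\Tr(F_i^{1/2}FF_i^{1/2})=\Tr(FF_i)=n$, hence $\Tr(N)=0$. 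The key elementary fact is then that a symmetric positive semidefinite matrix with zero trace is the zero matrix (its eigenvalues are nonnegative and sum to zero, so they all vanish, and a symmetric matrix with only zero eigenvalues is zero). Therefore $N=0$, i.e.\ $F_i^{1/2}FF_i^{1/2}=\Id$.

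Finally I would undo the congruence: multiplying $F_i^{1/2}FF_i^{1/2}=\Id$ on the right by $F_i^{1/2}$ and using $F_i^{-1/2}F_i^{1/2}=\Id$ gives $F_i^{1/2}FF_i=F_i^{-1/2}$, and multiplying on the left by $F_i^{-1/2}$ gives $FF_i=\Id$, which is exactly $(b)$. I do not anticipate a genuine obstacle here; the only point that deserves an explicit sentence is the "PSD plus zero trace implies zero" step, since everything else is a verbatim repetition of the Lemma's argument run in reverse. If one prefers to avoid invoking square roots twice, an alternative is to note directly that $F-F_i^{-1}\succeq 0$ and $\Tr\big(F_i(F-F_i^{-1})\big)=\Tr(FF_i)-n=0$ with $F_i\succ 0$ already force $F-F_i^{-1}=0$; I would pick whichever phrasing matches the Lemma most closely for consistency.
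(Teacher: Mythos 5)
Your proof is correct and follows essentially the same route as the paper's: Schur complement, congruence by $F_i^{1/2}$, cyclic property of the trace, and the observation that a symmetric positive semidefinite matrix with zero trace must vanish (a step the paper leaves implicit but you rightly spell out). The only blemish is a typo when undoing the congruence --- right-multiplying $F_i^{1/2}FF_i^{1/2}=\Id$ by $F_i^{1/2}$ gives $F_i^{1/2}FF_i=F_i^{1/2}$, not $F_i^{-1/2}$ --- but the final conclusion $FF_i=\Id$ is unaffected.
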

\begin{proof}
The implication $(b) \implies (a)$
is trivial to show. Next we show that $(a) \implies (b)$. To this end, assume that $(a)$ holds. Then, by the Schur complement lemma and a simple congruence transformation, one gets
\begin{equation}
\label{eq:XSminus}
F_i^{\frac{1}{2}}FF_i^{\frac{1}{2}}-\Id\succeq 0
\end{equation}
where $F_i^{\frac{1}{2}}$ stands for the principal square root of $F_i$. Now observe that from the cyclic property of the trace and, by the virtue of $(a)$, it follows that
\begin{equation}
\label{eq:XSminusTrace}
\Tr(F_i^{\frac{1}{2}}FF_i^{\frac{1}{2}}-\Id)=\Tr(FF_i-\Id)=0.
\end{equation}
In particular, \eqref{eq:XSminus} and \eqref{eq:XSminusTrace} imply that 
$$
F_i^{\frac{1}{2}}FF_i^{\frac{1}{2}}-\Id=0
$$
which in turn, by simple algebraic manipulations, gives 
$$
FF_i^{-1}=\Id.
$$
Hence, item $(b)$ is established. This concludes the proof. 
\end{proof}}
\subsection{Numerical values obtained in Section~\ref{sec:Example}.}
\label{app:num}
$$
\scalebox{0.8}{$\begin{aligned}
&K=\left[\begin{array}{ccc} -0.817 & -11.2 & -0.825\\ -2.47 & 3.56 & -0.228\\ 1.34 & 11.7 & -1.48 \end{array}\right]\\
&L=\left[\begin{array}{cc} 2.43 & -7.88\\ -7.39 & 14.0\\ 4.48 & -5.93 \end{array}\right]\\
&M=\left[\begin{array}{ccc} -12.2 & -11.4 & 3.75 \end{array}\right]\\
&N=\left[\begin{array}{cc} 6.84 & -13.0 \end{array}\right]\\
&Z=\left[\begin{array}{ccc} -0.123 & 1.63 & -1.75\\ 0.185 & -2.28 & 2.4 \end{array}\right]\\
&J=\left[\begin{array}{ccc} -0.123 & 1.63 & -1.75\\ 0.185 & -2.28 & 2.4 \end{array}\right]\\
&X=\left[\begin{array}{ccc} 101& -91.5 & -47.9\\ -91.5 & 130 & 28.6\\ -47.9 & 28.6 & 30.3 \end{array}\right], Y=\left[\begin{array}{ccc} 3.87 & -2.47 & -1.33\\ -2.47 & 14.1 & -1.1\\ -1.33 & -1.1 & 1.52 \end{array}\right]\\
&V=\left[\begin{array}{ccc} 0.0526 & -1.85 & 0.491\\ 0.441 & 9.4 & -5.73\\ -0.0246 & -0.495 & 1.03\end{array}\right],\\
&U=\left[\begin{array}{ccc} -1650 & 388 & 250\\ 1480 & -390 & -173\\ 803 & -174 & -139\end{array}\right]\\
&R=\scalebox{1}{$\left[\begin{array}{cccccc} 0.349 & 0.952 & 0.0252 & 0.0164 & -0.731 & 0.925\\ 0.952 & 8.95 & -0.883 & 0.544 & -7.88 & 8.52\\ 0.0252 & -0.883 & 0.232 & -0.0533 & 0.801 & -0.924\\ 0.0164 & 0.544 & -0.0533 & 1.07 & -0.506 & 0.485\\ -0.731 & -7.88 & 0.801 & -0.506 & 7.36 & -7.16\\ 0.925 & 8.52 & -0.924 & 0.485 & -7.16 & 8.52 \end{array}\right]$}\\
&Q=\left[\begin{array}{cc} 10.3 & -16.3\\ -16.3 & 28.4 \end{array}\right], O=\left[\begin{array}{cc} 10.3 & -16.3\\ -16.3 & 28.4 \end{array}\right]
\end{aligned}$}
$$
$$
\scalebox{0.8}{$\begin{aligned}
&F=\left[\begin{array}{cccccc} 6.27 & 0.00603 & -5.2 & 0.193 & -0.084 & -1.33\\ 0.00603 & 31.1 & -12.9 & -0.193 & 16.7 & -18.4\\ -5.2 & -12.9 & 16.8 & -0.11 & -6.96 & 9.41\\ 0.193 & -0.193 & -0.11 & 0.733 & -0.0536 & 0.073\\ -0.084 & 16.7 & -6.96 & -0.0536 & 9.59 & -9.42\\ -1.33 & -18.4 & 9.41 & 0.073 & -9.42 & 11.7 \end{array}\right]\\
\end{aligned}$}
$$
$$
\scalebox{0.8}{$\begin{aligned}
&F_i=\left[\begin{array}{cccccc} 0.36 & 0.949 & 0.0144 & 0.0104 & -0.718 & 0.94\\ 0.949 & 8.95 & -0.879 & 0.546 & -7.88 & 8.52\\ 0.0144 & -0.879 & 0.242 & -0.0474 & 0.788 & -0.939\\ 0.0104 & 0.546 & -0.0474 & 1.41 & -0.514 & 0.474\\ -0.718 & -7.88 & 0.788 & -0.514 & 7.51 & -7.04\\ 0.94 & 8.52 & -0.939 & 0.474 & -7.04 & 8.64 \end{array}\right]
\end{aligned}$}
$$
$$
\scalebox{0.8}{$P_2=\left[\begin{array}{cc} 1.81 & -2.46\\ -2.46 & 9.07 \end{array}\right]$}
$$
\end{document}